\documentclass[journal]{IEEEtran}
\usepackage{cite}
\usepackage{amsmath,amssymb,amsfonts}
\usepackage{algorithmic}
\usepackage{graphicx}
\usepackage[percent]{overpic}
\usepackage{textcomp}
\usepackage[dvipsnames]{xcolor}
\usepackage{mathtools}
\usepackage[english]{babel}

\usepackage{amsthm}

\theoremstyle{plain}
\newtheorem{theorem}{Theorem}
\newtheorem{lemma}{Lemma}
\newtheorem{proposition}{Proposition}

\theoremstyle{definition}
\newtheorem{remark}{Remark}
\newtheorem{definition}{Definition}
\newtheorem{assumption}{Assumption}

\usepackage{stmaryrd}
\usepackage{dsfont}
\usepackage{multirow}

\def\buildextended{} 
\newif\ifextendedversion
\ifdefined\buildextended
  \extendedversiontrue
\else
  \extendedversionfalse
\fi
\newcommand{\proofpointer}[2]{%
  \ifextendedversion
    See Appendix~\ref{#1}.%
  \else
    #2 The full proof is provided in the extended arXiv version of this paper.%
  \fi
}

\newcommand{\vmnom}{\vm^{\bullet}}
\newcommand{\vvnom}{\vv^{\bullet}}

\let\ForAll\forall
\renewcommand\forall{\ForAll\,}

\usepackage{amsmath,amssymb}

\newcommand{\R}{\mathbb{R}}

\renewcommand{\Re}[1]{\operatorname{Re}\left\{#1\right\}}

\newcommand{\vct}[1]{\boldsymbol{#1}}
\newcommand{\mtx}[1]{\boldsymbol{#1}}

\newcommand{\T}{\top}

\newcommand{\Range}{\operatorname{range}}

\newcommand{\norm}[1]{\left|\left|#1\right|\right|}

\newcommand{\abs}[1]{\left|#1\right|}

\newcommand{\set}[1]{\mathcal{#1}}

\newcommand{\vd}{\vct{d}}

\newcommand{\vk}{\vct{k}}

\newcommand{\vm}{\vct{m}}

\newcommand{\vp}{\vct{p}}
\newcommand{\vq}{\vct{q}}

\newcommand{\vs}{\vct{s}}

\newcommand{\vv}{\vct{v}}

\newcommand{\vz}{\vct{z}}
\newcommand{\valpha}{\vct{\alpha}}

\newcommand{\vtheta}{\vct{\theta}}

\newcommand{\vkappa}{\vct{\kappa}}

\newcommand{\vxi}{\vct{\xi}}

\newcommand{\vsigma}{\vct{\sigma}}

\newcommand{\vomega}{\vct{\omega}}
\newcommand{\vzero}{\vct{0}}
\newcommand{\vone}{\vct{1}}

\newcommand{\mK}{\mtx{K}}

\newcommand{\mR}{\mtx{R}}
\newcommand{\mS}{\mtx{S}}

\newcommand{\mX}{\mtx{X}}

\newcommand{\vvbar}{\underline{\vct{v}}}
\newcommand{\vvubar}{\overline{\vct{v}}}
\newcommand{\vmbar}{\underline{\vct{m}}}
\newcommand{\vmubar}{\overline{\vct{m}}}

\newcommand{\setA}{\set{A}}

\newcommand{\setC}{\set{C}}

\newcommand{\setE}{\set{E}}

\newcommand{\setK}{\set{K}}

\newcommand{\setN}{\set{N}}

\newcommand{\setP}{\set{P}}

\newcommand{\sgn}{\operatorname{sign}}

\newcommand{\DVDP}{\mR}
\newcommand{\DVDQ}{\mX}

\newcommand{\diag}{\operatorname{diag}}

\newcommand{\alphamin}{\underline{\alpha}}

\newcommand{\valphamin}{\underline{\valpha}}

\newcommand{\mStilde}{\mtx{\tilde{S}}}

\newcommand{\wtilde}{\tilde{w}}

\def\QEDclosed{\mbox{\rule[0pt]{1.3ex}{1.3ex}}} 

\def\QED{\QEDclosed} 

\usepackage[hidelinks]{hyperref}

\usepackage{graphicx} 

\usepackage{booktabs} 
\usepackage{array} 
\usepackage{paralist} 
\usepackage{verbatim} 
\usepackage{steinmetz}

\usepackage{cite}

\def\BibTeX{{\rm B\kern-.05em{\sc i\kern-.025em b}\kern-.08em
    T\kern-.1667em\lower.7ex\hbox{E}\kern-.125emX}}
\usepackage{balance}
\renewcommand{\baselinestretch}{1}

\DeclareSymbolFont{sfoperators}{OT1}{cmss}{m}{n}
\DeclareSymbolFontAlphabet{\mathsf}{sfoperators}

\makeatletter
\def\operator@font{\mathgroup\symsfoperators}
\makeatother

\begin{document}

\title{Zero-Sum Power Factor Games}
\author{Cameron Khanpour, Samuel Talkington, Mathieu Dahan, and Daniel K. Molzahn
\thanks{This material is based upon work supported by the National Science Foundation Graduate Research
Fellowship Program under Grant Nos. DGE-1650044 and DGE-2039655 and NSF Award \#2145564. Any opinions, findings, and conclusions
or recommendations expressed in this material are those of the authors and do not necessarily reflect the views of the National Science Foundation.}
\thanks{C. Khanpour and D. K. Molzahn are with the School of Electrical and Computer Engineering, Georgia Institute of Technology, Atlanta, GA, USA. (e-mail: \{khanpour, molzahn\}@gatech.edu)}
\thanks{S. Talkington was with the School of Electrical and Computer Engineering, Georgia Institute of Technology, Atlanta, GA, USA. He is now with the Department of Electrical Engineering and Computer Science, University of Michigan, Ann Arbor, MI, USA. (e-mail: talks@umich.edu)} 
\thanks{M. Dahan is with the School of Industrial and Systems Engineering, Georgia Institute of Technology, Atlanta, GA, USA. (e-mail: mathieu.dahan@isye.gatech.edu)}%
}

\maketitle

\begin{abstract}
Variable active power injections arising from device behavior or compromised dispatch complicate voltage regulation in electric power networks with distributed energy resources (DERs). An operator can limit the resulting voltage deviations by remotely selecting DER reactive power parameters before observing the active power injections. IEEE Standard 1547-2018 specifies constant power factor as one such control mode, coupling each device's reactive power to its realized active power. Using a linear voltage model, we formulate the operator's decision as a robust minimax problem in which the operator minimizes the largest feasible aggregate voltage deviation. We solve this problem by expressing the power factor decisions through continuous reactive to active power ratios and exactly decomposing the payoff according to the signs of the voltage deviations. When every feasible voltage residual remains on its initial side of nominal, the resulting ratios cancel each injection's contribution and yield a closed form minimax strategy. We identify realistic DER ratings for which this strategy applies and quantify the regulation capacity lost under restricted power factor ranges. Numerical tests check the cancellation computation, solve the complete minimax problem directly at a representative DER rating, and compare the linear voltage predictions with nonlinear AC power flow.
\end{abstract}

\begin{IEEEkeywords}
Game theory, power networks, power factor
\end{IEEEkeywords}

\section{Introduction}
\label{sec:introduction}
\ifextendedversion
\label{sec:high-level-prob-description}
\IEEEPARstart{R}{eliable} operation of alternating current electricity networks requires maintaining nodal voltages close to nominal values. ANSI Standard C84.1-2020 generally restricts service voltages to within $\pm 5\%$ of nominal \cite{ansi_voltage_standard_2020}. Integrating DERs such as solar photovoltaics, energy storage, and electric vehicles complicates this task because their active power injections vary with customer dispatch, aggregator decisions, curtailment, and site conditions \cite{srivastava_voltage_2023,arnold_adaptive_2022,robbins_two-stage_2013}.

The DER managing entity, termed the \emph{operator}, regulates voltage by configuring reactive power control. IEEE Standard 1547-2018 requires these control parameters to be remotely readable and writable and specifies constant power factor as a control mode \cite{noauthor_ieee_2018}. Under this mode, each installed power factor parameter couples the device's reactive power response to its realized active power. The power factor setting must therefore remain effective across active power injections that may be unknown when the operator configures the device.

Conventional voltage regulation methods often optimize active or reactive power separately, while newer methods coordinate both components \cite{arnold_optimal_2016,robbins_optimal_2016,gupta_model-less_2022,mahmoodi_dynamic_envelopes_2023}. These approaches do not address the case in which the operator must choose a fixed reactive power setting before another entity determines active power. This timing matters under both routine uncertainty and compromised dispatch due to a cyberattack. The DER apparent power limit further creates a physical tradeoff between active power capacity and reactive power support.

We therefore ask how the operator should select fixed power factor parameters before observing active power so that voltage regulation remains effective over the feasible DER behavior. We also ask when the physical structure yields an analytical solution and how device ratings and restrictions on power factor affect the achievable regulation performance. Fig.~\ref{fig:pf-game-highlevel} illustrates the information pattern.

\begin{figure}[t]
    \centering
    \includegraphics[width=0.84\linewidth,keepaspectratio]{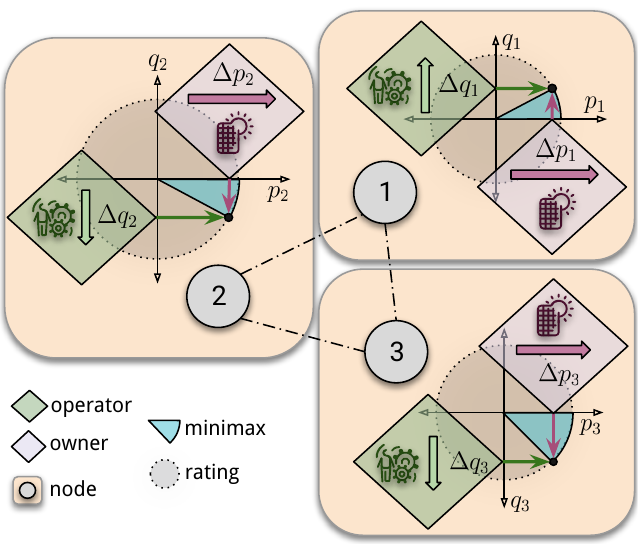}
    \caption{The operator installs the reactive power parameters before the active power player realizes the injections. Constant power factor control then couples the active and reactive powers.}
    \label{fig:pf-game-highlevel}
\end{figure}

\subsection{Contributions} 
We formulate the operator's decision as a robust minimax design. The operator first selects one constant power factor setting for each DER. The maximizing active power player, termed the \emph{owner}, then selects an injection vector within the device limits. The operator minimizes the largest aggregate absolute voltage deviation that the owner can produce. In the robust interpretation, the inner maximization identifies the worst feasible realization, and no strategic or malicious actor is required. In the cybersecurity interpretation, the same formulation is a zero-sum Stackelberg game in which a stealthy attacker observes the installed power factor settings and then selects compromised active power references. The reactive power parameter channel retains integrity, and the injections remain consistent with the commanded power factor setting to avoid immediate detection. Both interpretations lead to the same minimax problem with the operator moving first.

This paper makes the following three contributions.
\begin{enumerate}
    \item We formulate constant power factor control as a robust minimax design while enforcing each DER's apparent power limit. Because the power factor determines both the reactive response and the feasible active power set, the model captures capacity effects omitted when active and reactive power are treated independently. The formulation applies to robust voltage regulation and related hosting capacity and curtailment assessments.
    \item We derive the solution structure by representing the power factor decisions with continuous reactive-to-active power ratios (Lemma~\ref{lemma:k-transform}) and exactly decomposing the payoff by the sign pattern of the voltage residual (Proposition~\ref{prop:orthant-wise-affine-representation}). At fixed operator settings, each worst-case active power injection is either zero or its largest feasible value, depending on whether increasing that injection raises or lowers the aggregate voltage deviation (Proposition~\ref{prop:owner-best-response-fixed-kappa}). When no bus voltage crosses its nominal value, the optimal minimax ratio cancels the corresponding injection's contribution and yields a closed form strategy (Theorem~\ref{thm:closed-form-minimax}).
    \item We characterize how DER ratings and restrictions on the admissible power factor range affect regulation performance. Numerical tests check the cancellation computation, solve the complete minimax problem directly at a representative DER rating, and compare the linear voltage predictions with nonlinear AC power flow. The analytical power factors also resemble empirical recommendations in \cite{osti_1431468,rylander_methods_2016}, providing a game theoretic characterization.
\end{enumerate}

\subsection{Literature Review}
\subsubsection{Competitive reactive power control and voltage regulation}
Optimal reactive power control is a fundamental problem in electric power system engineering, both with network models \cite{arnold_adaptive_2022,arnold_optimal_2016,robbins_two-stage_2013,robbins_optimal_2016} and without them \cite{gupta_model-less_2022}. IEEE Standard 1547-2018 mandates that reactive power control parameters ``\textit{shall be available for reading and writing}" by DER managing entities through a communication interface \cite[10.6]{noauthor_ieee_2018}. This ability to remotely set the parameters is a key modeling assumption in our formulation.

Some of the early work in this area developed a two-stage control theoretic approach to controlling reactive power for voltage regulation \cite{robbins_two-stage_2013}. Augmentations of the voltage regulation problem with incentive-based considerations have emerged in the recent literature, in both offline \cite{zhou_incentive-based_2017} and online \cite{zhou_incentive-based_2018} optimization frameworks. 

Some of the work closest to ours studies noncooperative volt-var control \cite{zhou_local_2016,zhou_incentive-based_2017,zhou_incentive-based_2018,zhou_reverse_2021}. A noncooperative voltage regulation game is solved iteratively using fixed point and contraction arguments in \cite{zhou_local_2016}. Our work instead considers a robust minimax design in which the power factor is fixed before the active power realization is observed. Enforcing the apparent power limit also makes the operator anticipate how its setting changes both the voltage response and the feasible active power set, an effect related to signal anticipation in voltage regulation \cite{liu_signal-anticipation_2020}.

Recent work also considers structural interdependence between active and reactive power in game-theoretic settings~\cite{yang_prosumer-driven_2022,wang_reactive_2018}. A smoothed Nikaido-Isoda descent direction algorithm computes an equilibrium of a noncooperative Nash formulation for coordinated real and reactive power control~\cite{yang_prosumer-driven_2022}. A learning algorithm for a prospect theoretic reactive power compensation game converges to a mixed strategy Nash equilibrium \cite{wang_reactive_2018}. Both approaches compute equilibria iteratively. Our formulation asks a different question because the operator commits a power factor before observing active power and protects against the worst feasible subsequent realization. Our model also commits the constant power factor specified in IEEE Standard 1547-2018, which directly couples the active and reactive powers \cite{noauthor_ieee_2018}.

\subsubsection{Other non-cooperative games in infrastructure networks}
Game theory has been applied to reactive power control~\cite{arif_cooperative_2017,zhou_local_2016,farivar_equilibrium_2013,yang_prosumer-driven_2022,del_nozal_game-theoretic_2018}, on-load tap changer switching~\cite{tasnim_game_2023}, and the scheduling of loads, distributed generation, storage, and other flexible resources~\cite{scarabaggio_noncooperative_2022}. Strategic monitoring of power and water networks has also been formulated as a game \cite{milosevic_strategic_2024}. These works illustrate the broader role of game-theoretic models in infrastructure control and protection.

There is also extensive literature on \emph{economic} active-reactive power interactions. Bilevel optimization has been used to model separate entities controlling active and reactive power~\cite{almeida_optimal_2011}. Competitive active and reactive power scheduling frameworks have also been developed for electricity markets~\cite{jiang_strategic_2023,menati_competitive_2022}. Economic incentive mechanisms offer another approach for coordinating active and reactive power~\cite{yang_prosumer-driven_2022}.

Finally, our work is related to cybersecurity games for network systems. A tri-level defender-attacker-defender game models network security with compromised DERs represented by false complex power set points \cite{devendra_security_DER_2017}. In contrast, we study the complementary regime in which the certified reactive power control channel retains integrity, so that a
\textit{stealthy} adversary is restricted to injections consistent with the
commanded power factor setting (see Remark~\ref{rem:threat-model}). Section~\ref{sec:limitations:active-power-strategy} discusses the fully compromised regime as future work. Routing and network interdiction games provide related zero-sum models over network flows \cite{dahan_flow_routing_2015,dahan_probability_2022}. In contrast, our game is posed over complex nodal power contributions with an operator-dependent feasible set.

\subsection{Article Outline}
\label{sec:paper-outline}
Section~\ref{sec:prelim} develops the network model, voltage approximation, and robust power factor design. Section~\ref{sec:analytical-results} decomposes the lifted payoff by voltage orthant, characterizes the optimal minimax strategy in closed form when the voltage residual remains in the nominal orthant, and interprets the resulting power factors. Section~\ref{sec:numerical-results} verifies the results on realistic network models. Section~\ref{sec:discussion} concludes the paper.

\else
\label{sec:high-level-prob-description}

\IEEEPARstart{S}{ecure} operation of alternating current electricity networks requires maintaining nodal voltages close to nominal values. ANSI Standard C84.1-2020 generally restricts service voltages to within $\pm5\%$ of nominal \cite{ansi_voltage_standard_2020}. Active power injections from DERs complicate this task because they vary with customer dispatch, aggregator decisions, curtailment, and site conditions \cite{srivastava_voltage_2023,arnold_adaptive_2022,robbins_two-stage_2013}.

The DER managing entity, termed the \emph{operator}, regulates voltage by configuring reactive power control. IEEE Standard 1547-2018 requires these control parameters to be remotely readable and writable and specifies constant power factor as a control mode \cite{noauthor_ieee_2018}. Under this mode, each installed power factor parameter couples the device's reactive power response to its realized active power. The setting must therefore remain effective across active power injections that may be unknown when the operator configures the device.

Conventional voltage regulation methods often optimize active or reactive power separately, while newer methods coordinate both components \cite{arnold_optimal_2016,robbins_optimal_2016,gupta_model-less_2022}. These approaches do not address the case in which the operator must choose a fixed power factor before another entity determines active power. This timing matters under both routine uncertainty and compromised dispatch. We therefore ask how the operator should select fixed power factor parameters before observing active power so that voltage regulation remains effective over the feasible DER behavior. Fig.~\ref{fig:pf-game-highlevel} illustrates this information pattern.

\begin{figure}[t]
    \centering
    \includegraphics[width=0.84\linewidth,keepaspectratio]{figures/pf_game_sketch_final.pdf}
    \caption{The operator installs the reactive power parameters before the active power player realizes the injections. Constant power factor control then couples the active and reactive powers.}
    \label{fig:pf-game-highlevel}
\end{figure}

We formulate the operator's decision as a robust minimax design. The operator first selects one constant power factor setting for each DER. The maximizing active power player, termed the \emph{owner}, then selects an injection vector within the device limits. The operator minimizes the largest aggregate absolute voltage deviation that the owner can produce. In the robust interpretation, this maximization identifies the worst feasible realization, and no strategic or malicious actor is required. In the security interpretation, the same formulation is a zero-sum Stackelberg game in which a stealthy attacker observes the installed power factor settings and then selects compromised active power references. The reactive power parameter channel retains integrity, and the injections remain consistent with the commanded power factor setting to avoid immediate detection.

Related work has developed noncooperative and incentive-based formulations for volt-var control \cite{zhou_local_2016,zhou_incentive-based_2017,zhou_incentive-based_2018} and game theoretic models of active and reactive power interactions \cite{wang_reactive_2018,yang_prosumer-driven_2022}. The equilibrium methods for the latter models are iterative. Our formulation asks a different question because the operator commits the power factors before observing active power and protects against the worst feasible subsequent realization. A centralized optimal power flow can optimize both power components under a common objective, but it does not represent this decision under uncertainty. Security games with compromised DERs provide a complementary fully compromised model \cite{devendra_security_DER_2017}. We instead study the regime in which active power dispatch is compromised or uncertain while the certified reactive power control channel remains intact.

This paper makes three contributions. First, we formulate constant power factor control as a robust minimax design while enforcing each DER's apparent power limit. The power factor determines both the reactive response and the feasible active power set. Second, we derive the solution structure using continuous ratios between reactive and active power (Lemma~\ref{lemma:k-transform}) and an exact decomposition by the signs of the voltage deviations (Proposition~\ref{prop:orthant-wise-affine-representation}). For fixed operator settings, each worst case active power injection is zero or its largest feasible value (Proposition~\ref{prop:owner-best-response-fixed-kappa}). When no bus voltage crosses its nominal value, the optimal ratio cancels the corresponding injection's contribution and yields a closed form strategy (Theorem~\ref{thm:closed-form-minimax}). Third, we characterize how DER ratings and power factor restrictions affect regulation performance. Numerical tests check the cancellation computation, solve the complete minimax problem directly at a representative DER rating, and compare the linear voltage predictions with nonlinear AC power flow. The analytical settings resemble empirical power factor recommendations \cite{osti_1431468,rylander_methods_2016}, providing a game theoretic characterization.

Section~\ref{sec:prelim} develops the network model and robust power factor design. Section~\ref{sec:analytical-results} derives the minimax structure and analytical strategy. Section~\ref{sec:numerical-results} verifies the results, and Section~\ref{sec:discussion} concludes the paper.

\fi

\section{Problem Description}
\label{sec:prelim}
This section defines the network and constant power factor models and then formulates the operator's robust minimax design. Under the security interpretation, the same nested problem is a zero-sum Stackelberg game.
\subsection{Network and Device Model}
\label{sec:prelim-network-model}
We first define the network operating point, voltage approximation, constant power factor model, and regulation target.

\subsubsection{Electric power network model}
We consider an electric network $(\setN\cup\{0\},\setE)$, where $\setN=\{1,\dots,n\}$ contains the $n$ PQ nodes whose voltages depend on the nodal power injections. Node $0$ is the reference node and often represents a substation that maintains a constant voltage magnitude. A superscript $(\cdot)^0$ denotes a quantity at the initial operating point. Let $\vp^{\sf g},\vq^{\sf g}\in\R^n$ and $\vp^{\sf d},\vq^{\sf d}\in\R^n$ be the generated and demanded active and reactive power. Initial net injections are
\[
\vp^0\triangleq\vp^{\sf g}-\vp^{\sf d},
\qquad
\vq^0\triangleq\vq^{\sf g}-\vq^{\sf d}.
\]
The initial voltage phasors are $\vv^0\circ\exp(j\vtheta^0)$, where $\vv^0\in\R^n$ contains the voltage magnitudes, $\vtheta^0\in(-\pi,\pi]^n$ contains the voltage angles, and $j\triangleq\sqrt{-1}$.

\subsubsection{Approximation of the power flow equations}
Let $\bar{s}_i>0$ be the apparent power rating of the modeled device or aggregation at node $i$. We use $p_i$ and $q_i$ for its active and reactive output normalized by $\bar{s}_i$. The physical device output is therefore $(\bar{s}_i p_i,\bar{s}_i q_i)$. We take this controllable contribution to be zero at the initial operating point, so the total net injections at node $i$ after the device acts are $p_i^0+\bar{s}_i p_i$ and $q_i^0+\bar{s}_i q_i$. Thus, the power factor applies to the final output of the modeled device rather than the total net injection at the bus.

A linear approximation of the voltage magnitudes about the initial operating point is the map $\vv:\R^n\times\R^n\to\R^n$ given~by
\begin{equation}
\label{eq:linear-voltage-model}
\vv(\vp,\vq)\triangleq\vv^0+\mR\vp+\mX\vq,
\end{equation}
where $\mR,\mX\in\R^{n\times n}$ are the active and reactive power voltage sensitivity matrices evaluated at $(\vp^0,\vq^0)$, with the device ratings $\bar{s}_i$ absorbed into their corresponding columns.

\subsubsection{Implicit Representation of Reactive Power}
\label{sec:prelim-implicit}
We model DERs that export nonnegative active power. Under the rating normalization above, a feasible device output satisfies $p_i\geq0$ and $p_i^2+q_i^2\leq1$.

Constant power factor control assigns each device a power factor magnitude $\alpha_i\in(0,1]$ and a reactive power direction $\xi_i\in\{-1,0,1\}$. With the net injection convention, $\xi_i=1$ denotes reactive power injection and $\xi_i=-1$ denotes reactive power absorption. At unity power factor the direction is immaterial, and we set $\xi_i=0$ to avoid duplicate representations. For any realized output with $p_i>0$, consistency with the installed settings means
\begin{equation}
\label{eq:prelim-pfactor-def}
    \frac{p_i}{\sqrt{p_i^2+q_i^2}}=\alpha_i,
    \qquad \forall i\in\setN\text{ with }p_i>0,
\end{equation}
and
\begin{equation}
\label{eq:prelim-lead-lag}
    \sgn(q_i)=\xi_i,
    \qquad\quad\; \forall i\in\setN\text{ with }p_i>0.
\end{equation}
Equations~\eqref{eq:prelim-pfactor-def} and \eqref{eq:prelim-lead-lag} describe the realized output when $p_i>0$. At $(p_i,q_i)=(0,0)$, the realized power factor is undefined, but the installed setting $(\alpha_i,\xi_i)$ remains well defined.

A DER operating in constant power factor mode supplies reactive power in fixed proportion to its active power. The following lemma establishes this relationship.

\begin{lemma}[Reactive power representation]
\label{lemma:implicit-representation}
Fix $\vp\in\R_{\geq0}^n$ and $(\valpha,\vxi)\in(0,1]^n\times\{-1,0,1\}^n$ such that for every $i\in\setN$, $\xi_i=0$ if and only if $\alpha_i=1$. Define
\begin{subequations}
\label{eq:prelim:dqdp-def}
\begin{equation}
    \mK(\valpha,\vxi)\triangleq \diag(\vk(\valpha,\vxi)),
\end{equation}
\begin{equation}    
    k(\alpha_i,\xi_i)\triangleq \frac{\xi_i}{\alpha_i}\sqrt{1-\alpha_i^2},
    \qquad \forall i\in\setN.
\end{equation}
\end{subequations}
Let
\begin{equation}
\label{eq:implicit-reactive-representation}
    \vq\triangleq\mK(\valpha,\vxi)\vp.
\end{equation}
Then $p_i^2+q_i^2=p_i^2/\alpha_i^2$ for every $i\in\setN$. If $p_i>0$, the output satisfies \eqref{eq:prelim-pfactor-def} and \eqref{eq:prelim-lead-lag}. If $p_i=0$, then $q_i=0$.
\end{lemma}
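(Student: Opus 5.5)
The plan is a direct componentwise computation. Since $\mK(\valpha,\vxi)$ is diagonal, \eqref{eq:implicit-reactive-representation} decouples into $q_i = k(\alpha_i,\xi_i)\,p_i$ for each $i\in\setN$. I will therefore fix an arbitrary $i$ and verify the three claims for that scalar relation.

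\textbf{Apparent power identity.} First I will compute
\[
p_i^2+q_i^2 = p_i^2\Bigl(1+\tfrac{\xi_i^2(1-\alpha_i^2)}{\alpha_i^2}\Bigr).
\]
To simplify this I split into two cases using the hypothesis that $\xi_i=0$ exactly when $\alpha_i=1$.
\begin{itemize}
\item If $\alpha_i=1$, then $\xi_i=0$, so $q_i=0$ and $p_i^2+q_i^2=p_i^2=p_i^2/\alpha_i^2$.
\item If $\alpha_i<1$, then $\xi_i\in\{-1,1\}$, so $\xi_i^2=1$ and the bracket equals $(\alpha_i^2+1-\alpha_i^2)/\alpha_i^2=1/\alpha_i^2$.
\end{itemize}
A cleaner route covers both cases at once: observe that $\xi_i^2(1-\alpha_i^2)=1-\alpha_i^2$ in either case, because the factor $1-\alpha_i^2$ vanishes precisely when $\xi_i^2$ does. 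This biconditional is the only real subtlety in the proof. Without it, a choice such as $\xi_i=0$ with $\alpha_i<1$ would break the identity.

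\textbf{Case $p_i>0$.} Using $\alpha_i>0$, the identity gives $\sqrt{p_i^2+q_i^2}=p_i/\alpha_i$, so $p_i/\sqrt{p_i^2+q_i^2}=\alpha_i$, which is \eqref{eq:prelim-pfactor-def}. For \eqref{eq:prelim-lead-lag}, note that $p_i>0$, $\alpha_i>0$, and $\sqrt{1-\alpha_i^2}\ge0$. It follows that $\sgn(q_i)=\xi_i\,\sgn(\sqrt{1-\alpha_i^2})$. Again by the biconditional, $\sqrt{1-\alpha_i^2}>0$ whenever $\xi_i\neq0$, so this equals $\xi_i$. When $\xi_i=0$ we have $q_i=0$, so both sides are zero.

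\textbf{Case $p_i=0$.} Here $q_i=k(\alpha_i,\xi_i)\cdot 0=0$ immediately. This is well defined because $k$ is finite when $\alpha_i>0$.

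I do not expect any step to be a genuine obstacle. The main care point is applying the hypothesis $\xi_i=0\iff\alpha_i=1$ correctly in both the norm identity and the sign claim.
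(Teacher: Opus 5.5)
Your proposal is correct and follows the paper's proof essentially line for line: decouple via the diagonal structure to get $q_i=k(\alpha_i,\xi_i)p_i$, derive $p_i^2+q_i^2=p_i^2/\alpha_i^2$, deduce the power factor and sign identities when $p_i>0$, and note that $q_i=0$ when $p_i=0$. You also state explicitly that $\xi_i^2(1-\alpha_i^2)=1-\alpha_i^2$ requires the hypothesis $\xi_i=0\iff\alpha_i=1$; the paper's appendix uses this step without comment when it writes $q_i^2=\frac{1-\alpha_i^2}{\alpha_i^2}p_i^2$.
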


\begin{proof}
\proofpointer{apdx:proof-lemma-implicit}{Substitution gives $p_i^2+q_i^2=p_i^2/\alpha_i^2$. The power factor and sign identities follow when $p_i>0$, while $p_i=0$ directly gives $q_i=0$.}
\end{proof}

The diagonal entries of $\mK(\valpha,\vxi)$ are the installed proportionality coefficients between reactive and active power. Related models also express reactive power as a parameterized function of active power \cite{samadi_coordinated_2014} or power factor settings \cite{fernandez_implicit_2022}.

Combining Lemma~\ref{lemma:implicit-representation} with \eqref{eq:linear-voltage-model} gives
\begin{equation}
\label{eq:implicit-vmag-approximation}
    \vv(\vp,(\valpha,\vxi)) \triangleq \vv^0 + \mS(\valpha,\vxi) \vp,
\end{equation}
where the sensitivity matrix associated with the installed settings is
\begin{equation}
\label{eq:prelim:smatrix-def}
    \mS(\valpha,\vxi)  \triangleq \DVDP + \DVDQ \mK(\valpha,\vxi).
\end{equation}
The approximation \eqref{eq:implicit-vmag-approximation} therefore includes the reactive power that accompanies each realized active power injection under constant power factor control.

\subsubsection{Nominal voltages and voltage regulation}
\label{sec:prelim:nominal-voltages}
The operator aims to keep the voltage magnitudes close to the nominal profile $\vvnom\in\R^n$. Tracking nominal voltage penalizes both upward and downward deviations and preserves operating margin to the voltage limits. Let $\vvbar,\vvubar\in\R^n$ be the lower and upper voltage magnitude bounds. We define
\[
\vmnom\triangleq\vvnom-\vv^0,
\qquad
\vmbar\triangleq\vvbar-\vv^0,
\qquad
\vmubar\triangleq\vvubar-\vv^0
\]
as the nominal, lower, and upper voltage offsets from the initial operating point.

\begin{assumption}[Bounded power factors]
\label{assum:bounded-power factors}
The devices can operate at power factors $\valpha$ satisfying $\vzero<\valphamin\leq\valpha\leq\vone$.
\end{assumption}

The lower bound $\valphamin$ represents the minimum realizable power factor at each node. It is typically determined by device capability or technical standards.

\begin{assumption}[Feasible reference voltages]
\label{assum:no-overvoltages}
The initial and nominal voltage profiles satisfy the voltage bounds. Equivalently, $\vmbar\leq\vzero\leq\vmubar$ and $\vmbar\leq\vmnom\leq\vmubar$.
\end{assumption}

The zero DER injection leaves $\vv=\vv^0$, so this assumption provides a feasible reference point before the uncertain DER injections are realized.

\subsection{Power Factor Minimax Formulation}
\label{sec:prelim:game-formulation}

The operator first installs one constant power factor setting for each DER. The active power realization then occurs within the device limits, and the reactive power is determined by the installed setting. The operator chooses the settings to minimize the largest aggregate voltage deviation from nominal that can occur afterward.

\subsubsection{Feasible decisions}
\label{sec:prelim:action-set-formulation}

Before a power factor setting is installed, the normalized nonnegative active power set is the reference box
\begin{equation}
\label{eq:prelim:injector-action-set}
\setP\triangleq[0,1]^n.
\end{equation}
The upper bound follows from the unit apparent power rating because $0\leq p_i\leq\sqrt{p_i^2+q_i^2}\leq1$. The set $\setP$ is a reference box used to define safe operator settings. The active power player's actual feasible set is defined after the operator acts.

The operator action set is
\begin{equation}
\label{eq:prelim:controller-action-set}
\begin{split}
\setA \triangleq \Big\{ &(\valpha,\vxi) \in (0,1]^n \times \{-1,0,1\}^n :
\valphamin \le \valpha \le \vone, \\
&\xi_i \in \{\pm 1\}\ \text{if } \alpha_i<1,\ \xi_i = 0\ \text{if } \alpha_i=1,\ \forall i\in\setN,\\
&\vmbar \le \mS(\valpha,\vxi)\vp' \le \vmubar,\ \forall \vp' \in \setP \Big\}.
\end{split}
\end{equation}
We require an installed setting to respect the voltage limits throughout the reference set \eqref{eq:prelim:injector-action-set}. This conservative safety requirement prevents the operator from improving the objective by selecting settings that permit unacceptable voltages. We assume that $\setA$ is nonempty.

After the operator selects $(\valpha,\vxi)\in\setA$, the apparent power rating restricts the active power realization to
\begin{equation}
\label{eq:prelim:owner-feasible-set-conditioned}
\begin{aligned}
\setP(\valpha,\vxi)
&\triangleq
\left\{
\vp\in\setP :
(1+k(\alpha_i,\xi_i)^2)p_i^2 \le 1,\: \forall i\in\setN
\right\}\\
&=\prod_{i=1}^n[0,\alpha_i].
\end{aligned}
\end{equation}
The equality follows from $1+k(\alpha_i,\xi_i)^2=1/\alpha_i^2$. Thus, lowering the installed power factor reserves more capacity for reactive power and reduces the largest feasible active power injection.

\begin{remark}[Threat model]
\label{rem:threat-model}
Under the cybersecurity interpretation, a compromised aggregator or controller observes the installed power factor settings and then selects $\vp\in\setP(\valpha,\vxi)$. The adversary can manipulate active power references through an aggregator platform, device energy management systems, or coordinated load altering attacks \cite{mohsenianrad_load_altering_2011,soltan_blackiot_2018}. We assume that the reactive power parameters remain under the Area EPS (Electric Power System) operator's control through the standardized interface and certified grid support functions \cite[Cl.~5, 10.6, 11]{noauthor_ieee_2018}.

The requirement $\vq=\mK(\valpha,\vxi)\vp$ characterizes the \textit{stealthy} attack class. Overriding reactive power would violate the installed setting and could be identified from monitoring data after the realization occurs \cite[Cl.~10]{noauthor_ieee_2018}. Thus, an intelligent malicious actor would inject so that they comply with their set power factor to avoid exposure. This restriction is analogous to false data injection attacks that remain consistent with a detector \cite{liu_false_2011}.

The vector $\vp$ may collect outputs from multiple devices. Its joint maximization represents a worst-case uncertainty set and does not require the device owners to coordinate. A compromised aggregator supplies the strategic interpretation when one actor controls the vector.
\end{remark}

The conditional set $\setP(\valpha,\vxi)$ makes the active power limits depend on the installed setting.

\subsubsection{Aggregate voltage deviation}
\label{sec:prelim:objective-formulation}

\begin{definition}[Aggregate voltage deviation]
\label{def:overvoltage-margin}
For any $(\valpha,\vxi)\in\setA$ and $\vp\in\setP(\valpha,\vxi)$, the aggregate voltage deviation from nominal is
\begin{subequations}
    \label{eq:def:margin-l1-norm}
\begin{align}
    w(\vp,(\valpha,\vxi)) &\triangleq \norm{\vvnom - \vv(\vp,(\valpha,\vxi))}_1\\
    &=\norm{\vmnom - \mS(\valpha,\vxi)\vp}_1.
\end{align}
\end{subequations}
\end{definition}

The $\ell_1$ norm assigns equal weight to voltage deviations at the participating nodes. At $\vp=\vzero$, Lemma~\ref{lemma:implicit-representation} gives $\vq=\vzero$, so the operator cannot alter voltage through a constant power factor setting alone. A device may be physically capable of supplying reactive power at zero active power under another control mode, such as constant reactive power or volt-var control, but that independent reactive power action is not available in the constant power factor mode studied here.

\subsubsection{Robust minimax formulation}
\label{sec:prelim:game-solution-concept}

The robust power factor design is
\begin{equation}
\label{eq:full-network-minimax-game}
    \min_{(\valpha,\vxi)\in\setA}\;
    \max_{\vp\in\setP(\valpha,\vxi)}
    \; w(\vp,(\valpha,\vxi)).
\end{equation}
The operator first selects $(\valpha,\vxi)$. The inner problem then selects the worst feasible active power realization from the set \eqref{eq:prelim:owner-feasible-set-conditioned}, which depends on the installed settings. Under the robust interpretation, this maximization represents uncertainty and requires no strategic actor. Under the security interpretation, \eqref{eq:full-network-minimax-game} is a sequential zero-sum Stackelberg game in which the active power player observes the installed settings before choosing $\vp$.

\section{Analytical Results}
\label{sec:analytical-results}

In these results, we first represent the power factor magnitude and direction by the continuous coefficient vector $\vkappa$. We then decompose the inner worst-case objective according to the signs of the voltage residual. On the subset where no feasible injection changes the nominal sign pattern, the decomposition yields a robust minimax setting in closed form.

This analysis uses the linear voltage approximation \eqref{eq:implicit-vmag-approximation} centered at the initial operating point. The results need not hold exactly for the nonlinear AC power flow equations, whose agreement with the linear predictions is evaluated in Section~\ref{sec:numerical-results}.

\subsection{Continuous Reformulation}
\label{sec:analy:game-transformation}

The action set $\setA$ in \eqref{eq:prelim:controller-action-set} includes a discrete choice for the direction of reactive power. The coefficient $\kappa_i$ in \eqref{eq:prelim:dqdp-def} combines this direction with the power factor magnitude. Under Assumption~\ref{assum:bounded-power factors}, its range is the box
\begin{equation}
\label{eq:analy:k-range-set}
\Range(\vk)
=
\left\{
\vkappa\in\R^n:
\vk(\valphamin,-\vct{1})
\leq\vkappa\leq
\vk(\valphamin,\vct{1})
\right\}.
\end{equation}
Define $\mStilde(\vkappa)\triangleq\mR+\mX\diag(\vkappa)$. For a fixed $\vkappa$, the feasible active power set is
\begin{equation}
\label{eq:owner-feasible-set}
\setP(\vkappa)
\triangleq
\prod_{i=1}^n
\left[0,\frac{1}{\sqrt{1+\kappa_i^2}}\right].
\end{equation}
The corresponding operator action set is
\begin{equation}
\label{eq:analy:K-set-definition}
\setK
\triangleq
\left\{
\vkappa\in\Range(\vk):
\vmbar\leq\mStilde(\vkappa)\vp'
\leq\vmubar
\quad\forall\vp'\in\setP
\right\}.
\end{equation}
For $\vkappa\in\setK$ and $\vp\in\setP(\vkappa)$, the lifted objective is
\begin{equation}
\label{eq:K-l1-lifted-margin}
\wtilde(\vp,\vkappa)
\triangleq
\norm{\vmnom-\mStilde(\vkappa)\vp}_1.
\end{equation}

The scalar $\kappa_i$ is the installed coefficient in $q_i=\kappa_i p_i$. When $p_i>0$, it also equals $q_i/p_i$. This interpretation does not require a realized power factor at zero output.

\begin{lemma}[Equivalent continuous parameterization]
\label{lemma:k-transform}
The map $(\valpha,\vxi)\mapsto\vk(\valpha,\vxi)$ is a bijection from $\setA$ to $\setK$. Its inverse is
\begin{equation}
\label{eq:apdx:k-xi-inv}
\alpha_i(\kappa_i)
\triangleq
\frac{1}{\sqrt{1+\kappa_i^2}},
\quad
\xi_i(\kappa_i)
\triangleq
\sgn(\kappa_i),
\quad i\in\setN,
\end{equation}
where $\sgn(0)=0$. If $\vkappa=\vk(\valpha,\vxi)$, then
\[
\setP(\valpha,\vxi)=\setP(\vkappa)
\quad\text{and}\quad
w(\vp,(\valpha,\vxi))=\wtilde(\vp,\vkappa)
\]
for every feasible $\vp$.
\end{lemma}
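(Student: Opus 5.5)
I would prove the lemma by treating a single coordinate first and then lifting to vectors, because every object involved is defined coordinatewise except the safety constraint. That constraint in $\setA$ and $\setK$ reads the same once $\mS(\valpha,\vxi)=\mStilde(\vk(\valpha,\vxi))$.

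The scalar step studies the map $(\alpha,\xi)\mapsto k(\alpha,\xi)=\frac{\xi}{\alpha}\sqrt{1-\alpha^2}$ on the admissible pairs. These pairs are $\alpha\in[\underline{\alpha}_i,1)$ with $\xi\in\{\pm1\}$, together with $(1,0)$.
\begin{itemize}
\item On each branch $\xi=\pm1$, the function $\alpha\mapsto\sqrt{1-\alpha^2}/\alpha$ is continuous and strictly decreasing on $[\underline{\alpha}_i,1)$. Its derivative is $-1/(\alpha^2\sqrt{1-\alpha^2})<0$.
\item So the branch $\xi=1$ maps onto $(0,k(\underline{\alpha}_i,1)]$, and the branch $\xi=-1$ maps onto $[k(\underline{\alpha}_i,-1),0)$.
\item The pair $(1,0)$ maps to $0$.
\end{itemize}
These three images are disjoint and their union is the interval $[k(\underline{\alpha}_i,-1),k(\underline{\alpha}_i,1)]$. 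This is the $i$th factor of $\Range(\vk)$ in \eqref{eq:analy:k-range-set}, so the scalar map is a bijection onto that interval.

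To identify the inverse I would verify \eqref{eq:apdx:k-xi-inv} directly:
\begin{itemize}
\item $1+k(\alpha,\xi)^2=1/\alpha^2$ when $\xi\neq0$, and also when $(\alpha,\xi)=(1,0)$. Hence $1/\sqrt{1+\kappa^2}=\alpha$, using $\alpha>0$.
\item $\sgn(k(\alpha,\xi))=\xi$, since $\sqrt{1-\alpha^2}>0$ exactly when $\alpha<1$, which is exactly when $\xi\neq0$.
\end{itemize}

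Taking products over $i$, $\vk$ is a bijection from the set of admissible $(\valpha,\vxi)$, which is $\setA$ without its safety constraint, onto $\Range(\vk)$. Since $\mK(\valpha,\vxi)=\diag(\vk(\valpha,\vxi))$, we get $\mS(\valpha,\vxi)=\mStilde(\vk(\valpha,\vxi))$. The safety constraint $\vmbar\le\mS\vp'\le\vmubar$ for all $\vp'\in\setP$ is therefore identical to the corresponding constraint in \eqref{eq:analy:K-set-definition}. So the bijection restricts to a bijection $\setA\to\setK$.

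The two equalities of sets and objectives then follow.
\begin{itemize}
\item Using $1+k(\alpha_i,\xi_i)^2=1/\alpha_i^2$ in \eqref{eq:prelim:owner-feasible-set-conditioned} gives $\setP(\valpha,\vxi)=\prod_i[0,\alpha_i]=\prod_i[0,1/\sqrt{1+\kappa_i^2}]=\setP(\vkappa)$.
\item Comparing \eqref{eq:def:margin-l1-norm} with \eqref{eq:K-l1-lifted-margin} through $\mS=\mStilde(\vkappa)$ gives $w=\wtilde$ for every feasible $\vp$.
\end{itemize}

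The only delicate step is the handling of $\kappa_i=0$. Surjectivity onto $0$ relies on the unity power factor pair $(1,0)$, and injectivity relies on the convention $\xi_i=0$ if and only if $\alpha_i=1$, which rules out a duplicate representation of zero. I would also note that $\underline{\alpha}_i=1$ is allowed. In that case each branch is empty and the interval collapses to $\{0\}$, which the argument still covers.
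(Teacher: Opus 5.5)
Your proof is correct, and its outer structure matches the paper's. Both use the coordinatewise inverse formulas, the identity $\mS(\valpha,\vxi)=\mStilde(\vk(\valpha,\vxi))$ to transfer the robust voltage constraint between $\setA$ and $\setK$, and $1+k(\alpha_i,\xi_i)^2=1/\alpha_i^2$ to obtain the feasible-set and objective identities.

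Where you differ is the scalar bijection. You get injectivity and the image from strict monotonicity and continuity of $\alpha\mapsto\sqrt{1-\alpha^2}/\alpha$ on each sign branch. This also uses its limit $0$ as $\alpha\to1^-$, which you leave implicit. Only then do you check that the stated formulas are a left inverse.

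The paper never invokes monotonicity or intermediate values and verifies both compositions algebraically. For injectivity, it recovers $(\alpha_i,\xi_i)$ from $k(\alpha_i,\xi_i)$. For surjectivity, it takes $\vkappa\in\setK$, defines $(\valpha,\vxi)$ by the inverse formulas, reads admissibility off membership in $\Range(\vk)$, and computes $k(\alpha_i(\kappa_i),\xi_i(\kappa_i))=\kappa_i$.

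The paper's route is shorter and calculus-free. Yours makes explicit why the box in the range formula is exactly the image of the admissible settings. The paper leaves this step implicit when it asserts that $\valphamin\le\valpha$ places $\vk(\valpha,\vxi)$ in $\Range(\vk)$. Your version also shows clearly the role of the convention $\xi_i=0\iff\alpha_i=1$ and the degenerate case $\underline{\alpha}_i=1$.
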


\begin{proof}
\proofpointer{apdx:proof-lemma-k-transform}{The inverse formulas recover the power factor magnitude and direction uniquely. Substitution gives the same reactive output, feasible active power set, and objective.}
\end{proof}

Thus, the partly discrete description $(\valpha,\vxi)$ can be replaced by $\vkappa$ without changing the model.

Define the exact inner worst-case value
\begin{equation}
\label{eq:worst-case-owner-value}
I(\vkappa)
\triangleq
\max_{\vp\in\setP(\vkappa)}
\wtilde(\vp,\vkappa).
\end{equation}
Lemma~\ref{lemma:k-transform} makes \eqref{eq:full-network-minimax-game} equivalent to
\begin{equation}
\label{eq:exact-single-stage-game-in-kappa}
\min_{\vkappa\in\setK} I(\vkappa).
\end{equation}

\begin{lemma}[Existence of a minimax solution]
\label{lemma:compact-convex action sets}
The set $\setK$ is nonempty, compact, and convex. For every $\vkappa\in\setK$, the set $\setP(\vkappa)$ is nonempty, compact, and convex. The function $I$ is continuous on $\setK$, so \eqref{eq:exact-single-stage-game-in-kappa} attains a minimum.
\end{lemma}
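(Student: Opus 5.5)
Nonemptiness of $\setK$ follows from the standing assumption that $\setA$ is nonempty together with Lemma~\ref{lemma:k-transform}: the bijection sends any element of $\setA$ to an element of $\setK$.

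For compactness and convexity, I will write $\setK$ as an intersection of simpler sets. The first is the box $\Range(\vk)$ in \eqref{eq:analy:k-range-set}, which is compact and convex because its bounds $\vk(\valphamin,\pm\vone)$ are finite by Assumption~\ref{assum:bounded-power factors}. The second is the family of constraint sets indexed by $\vp'\in\setP$. For fixed $\vp'$, the map $\vkappa\mapsto\mStilde(\vkappa)\vp'=\mR\vp'+\mX\diag(\vp')\vkappa$ is affine in $\vkappa$. Hence each set $\{\vkappa:\vmbar\le\mStilde(\vkappa)\vp'\le\vmubar\}$ is a closed polyhedron. An arbitrary intersection of closed convex sets is closed and convex, and intersecting with the compact box gives compactness.

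For each $\vkappa$, the set $\setP(\vkappa)$ in \eqref{eq:owner-feasible-set} is a product of intervals $[0,(1+\kappa_i^2)^{-1/2}]$ with positive length. It is therefore a nonempty compact convex box.

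Continuity of $I$ is the main step, because the feasible set of the inner maximization moves with $\vkappa$. I will remove this dependence by rescaling. Define $D(\vkappa)\triangleq\diag\big((1+\kappa_i^2)^{-1/2}\big)$ and substitute $\vp=D(\vkappa)\vz$ with $\vz\in[0,1]^n$. Then
\[
I(\vkappa)=\max_{\vz\in[0,1]^n}\norm{\vmnom-\mStilde(\vkappa)D(\vkappa)\vz}_1 .
\]
The integrand $g(\vz,\vkappa)$ is jointly continuous, since $\mStilde$ is affine and $D$ is continuous in $\vkappa$. It is maximized over a fixed compact set, so $I$ is continuous on $\setK$. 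I would argue this directly from uniform continuity of $g$ on the compact set $[0,1]^n\times\setK$, or cite Berge's maximum theorem as an alternative.

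The Weierstrass theorem then gives that \eqref{eq:exact-single-stage-game-in-kappa} attains its minimum over the nonempty compact set $\setK$.
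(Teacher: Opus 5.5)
Your proposal is correct and follows the paper's proof essentially step for step: nonemptiness of $\setK$ from $\setA\neq\varnothing$ via Lemma~\ref{lemma:k-transform}; compactness and convexity of $\setK$ as the range box intersected with closed sets defined by constraints affine in $\vkappa$; and continuity of $I$ via the rescaling $\vp=\diag(\vd(\vkappa))\vz$ onto the fixed box $[0,1]^n$, followed by the maximum theorem and Weierstrass. Your optional uniform-continuity argument on $[0,1]^n\times\setK$ is a valid elementary substitute for citing Berge's theorem.
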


\begin{proof}
\proofpointer{apdx:proof-of-compact-cvx-action-sets}{The set $\setK$ is a bounded box intersected with closed affine halfspaces. Parameterizing each conditional active power box by $[0,1]^n$ makes continuity of $I$ follow from the maximum theorem.}
\end{proof}

\subsection{Exact Orthant Reduction}

\label{sec:analy:greedy-conservative}

For $\vs\in\{-1,0,1\}^n$ and $\vkappa\in\setK$, define
\begin{equation}
\label{eq:orthant-region-definition}
\mathcal R_{\vs}(\vkappa)
\triangleq
\left\{
\vp\in\setP(\vkappa):
\sgn\bigl(\vmnom-\mStilde(\vkappa)\vp\bigr)=\vs
\right\},
\end{equation}
where $\sgn$ is applied componentwise with $\sgn(0)=0$. These regions form an exact disjoint partition of $\setP(\vkappa)$.

\begin{proposition}[Exact orthant decomposition]
\label{prop:orthant-wise-affine-representation}
For every $\vkappa\in\setK$,
\begin{equation}
\label{eq:exact-orthant-decomposition}
\begin{aligned}
I(\vkappa)
=
\max_{\vs\in\{-1,0,1\}^n}
&\max_{\vp\in\mathcal R_{\vs}(\vkappa)}
\Bigl\{
\vs^\top\vmnom
{}\\
&+ \left(
-\mR^\top\vs-\diag(\vkappa)\mX^\top\vs
\right)^\top\vp
\Bigr\},
\end{aligned}
\end{equation}
where empty regions are omitted.
\end{proposition}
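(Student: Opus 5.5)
The plan is to use that the sign vector of the residual determines the $\ell_1$ norm as a linear functional, and that the orthant regions partition the feasible set. Fix $\vkappa\in\setK$ and set $\vct{r}(\vp)\triangleq\vmnom-\mStilde(\vkappa)\vp$. For any vector $\vct{r}$, $\norm{\vct{r}}_1=\sgn(\vct{r})^\top\vct{r}$ componentwise with $\sgn(0)=0$, since $\abs{r_i}=\sgn(r_i)r_i$ holds in all three cases. Hence, for every $\vp\in\mathcal R_{\vs}(\vkappa)$,
\[
\wtilde(\vp,\vkappa)=\vs^\top\vmnom-\vs^\top\mR\vp-\vs^\top\mX\diag(\vkappa)\vp .
\]
Transposing gives $\vs^\top\mX\diag(\vkappa)\vp=(\diag(\vkappa)\mX^\top\vs)^\top\vp$, which yields exactly the affine expression inside the braces of \eqref{eq:exact-orthant-decomposition}.

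Next I would verify the partition claim stated before the proposition. Every $\vp\in\setP(\vkappa)$ has a unique sign vector $\sgn(\vct{r}(\vp))\in\{-1,0,1\}^n$, so the regions $\mathcal R_{\vs}(\vkappa)$ are pairwise disjoint and their union is $\setP(\vkappa)$. Consequently the supremum of $\wtilde$ over $\setP(\vkappa)$ equals the maximum over nonempty $\vs$ of the supremum over $\mathcal R_{\vs}(\vkappa)$. On each region, $\wtilde$ coincides with the displayed affine function, which gives the identity as an equality of suprema.

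The main obstacle is that the inner maxima are written as $\max$, yet the regions $\mathcal R_{\vs}(\vkappa)$ need not be closed: strict inequalities $r_i>0$ or $r_i<0$ define relatively open pieces. I would handle this in two steps.

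First, $I(\vkappa)$ is attained at some $\vp^\star\in\setP(\vkappa)$. This holds because $\setP(\vkappa)$ is compact by Lemma~\ref{lemma:compact-convex action sets} and $\wtilde$ is continuous. The point $\vp^\star$ lies in the region for $\vs^\star=\sgn(\vct{r}(\vp^\star))$, so the outer maximum is attained with value $I(\vkappa)$.

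Second, for any other nonempty region, the supremum of the affine function there is bounded by $I(\vkappa)$. This follows because on that region the affine function equals $\wtilde\le I(\vkappa)$.

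Together these show the outer maximum equals $I(\vkappa)$. I would note that inner suprema that are not attained never exceed the value achieved by $\vs^\star$, so the expression is well defined if the inner $\max$ is read as a supremum. Alternatively, one can observe that on the closure of $\mathcal R_{\vs}(\vkappa)$ the affine function is still bounded above by $\wtilde$, using $\vs^\top\vct{r}\le\norm{\vct{r}}_1$ for any $\vs\in\{-1,0,1\}^n$. This gives the cleaner bound: for every $\vs$ and every $\vp\in\setP(\vkappa)$, the affine expression is at most $\wtilde(\vp,\vkappa)$, with equality on $\mathcal R_{\vs}(\vkappa)$. That bound closes the argument.
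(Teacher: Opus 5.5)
Your proof is correct and follows the paper's argument. On each region $\mathcal R_{\vs}(\vkappa)$ the $\ell_1$ norm equals $\vs^\top(\vmnom-\mStilde(\vkappa)\vp)$, and maximizing over the disjoint partition of $\setP(\vkappa)$ recovers $I(\vkappa)$; you additionally handle the non-closed regions, by reading the inner maxima as suprema or by using $\vs^\top\vct{r}\le\norm{\vct{r}}_1$ on their closures, which fills in an attainment detail the paper's short proof passes over.
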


\begin{proof}
If $\vp\in\mathcal R_{\vs}(\vkappa)$, then
\[
\wtilde(\vp,\vkappa)
=
\vs^\top\bigl(\vmnom-\mStilde(\vkappa)\vp\bigr).
\]
Substituting $\mStilde(\vkappa)=\mR+\mX\diag(\vkappa)$ gives the affine expression in \eqref{eq:exact-orthant-decomposition}. Maximizing over the exact partition gives $I(\vkappa)$.
\end{proof}

The sign pattern
\(
\vs^\bullet\triangleq\sgn(\vmnom)
\)
is the sign of the voltage residual at zero DER output. Since $\vmnom=\vvnom-\vv^0$, its entries show whether initial voltages are below, above, or at nominal. For fixed $\vkappa$, $\mathcal R_{\vs^\bullet}(\vkappa)$ contains injections that leave every nonnominal voltage strictly on its initial side. A voltage reaching nominal has a zero entry in its exact sign pattern.

\begin{definition}[Nominal orthant subset]
\label{def:nominal orthant-subset}
Define
\[
\setK^\bullet\triangleq
\left\{\vkappa\in\setK: \setP(\vkappa) = 
\overline{\mathcal R_{\vs^\bullet}(\vkappa)}
\right\}.
\]
\end{definition}

The closure in $\R^n$ allows voltages to reach nominal without crossing to the opposite side. For each $j$ with $s_j^\bullet\neq0$, the condition is equivalent to
\begin{equation}
\label{eq:nominal-orthant-row-condition}
\sum_{i=1}^n
\frac{\max\!\left\{s_j^\bullet[\mStilde(\vkappa)]_{ji},0\right\}}
{\sqrt{1+\kappa_i^2}}
\leq \left|m_j^\bullet\right|.
\end{equation}
If $s_j^\bullet=0$, the corresponding row of $\mStilde(\vkappa)$ must be zero. Smaller ratings, weaker voltage sensitivity toward nominal, and greater initial distance favor membership; voltage limit headroom separately determines whether $\vkappa\in\setK$. For uniform apparent power ratings $\bar{s}_i=\bar{s}$ and every $m_j^\bullet\neq0$, the threshold for fixed $\vkappa$ is
\begin{equation}
\label{eq:uniform-rating-nominal-orthant-threshold}
\bar{s}\leq
\min_{j\in\setN}
\frac{\left|m_j^\bullet\right|}
{\displaystyle\sum_{i=1}^n
\frac{\max\!\left\{s_j^\bullet([\mR]_{ji}+[\mX]_{ji}\kappa_i),0\right\}}
{\sqrt{1+\kappa_i^2}}},
\end{equation}
where $\mR$ and $\mX$ are evaluated at unit ratings, and a zero denominator imposes no restriction. For $\vkappa\in\setK^\bullet$, the $\ell_1$ objective is affine over the entire feasible set, so the inner maximization is separable. Define
\begin{equation}
\label{eq:nominal-sigma-omega}
\vsigma\triangleq-\mR^\top\vs^\bullet,
\qquad
\vomega\triangleq-\mX^\top\vs^\bullet.
\end{equation}

\begin{proposition}[Nominal orthant value and worst-case realizations]
\label{prop:owner-best-response-fixed-kappa}
For every $\vkappa\in\setK^\bullet$, the exact inner value is
\begin{equation}
\label{eq:nominal-orthant-value}
I(\vkappa)
=
\norm{\vmnom}_1
+
\sum_{i=1}^n
\frac{[\sigma_i+\omega_i\kappa_i]_+}
{\sqrt{1+\kappa_i^2}}.
\end{equation}
The worst-case active power realizations are characterized componentwise by
\[
p_i^{\mathrm{wc}}=
\begin{cases}
\dfrac{1}{\sqrt{1+\kappa_i^2}},
& \sigma_i+\omega_i\kappa_i>0,\\[2mm]
0,
& \sigma_i+\omega_i\kappa_i<0,\\
\text{\rm any value in }
\left[0,\dfrac{1}{\sqrt{1+\kappa_i^2}}\right],
& \sigma_i+\omega_i\kappa_i=0.
\end{cases}
\]
\end{proposition}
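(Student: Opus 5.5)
The argument uses the definition of $\setK^\bullet$ to replace the absolute values in $\wtilde$ by a single linear functional on all of $\setP(\vkappa)$. Once that is done, the inner maximization becomes a separable linear program over a box.

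\emph{Step 1: removing the absolute values.} Fix $\vkappa\in\setK^\bullet$ and $\vp\in\setP(\vkappa)$. By Definition~\ref{def:nominal orthant-subset}, $\vp$ lies in the closure of $\mathcal R_{\vs^\bullet}(\vkappa)$. The map $\vp\mapsto\vmnom-\mStilde(\vkappa)\vp$ is continuous, so every component $j$ with $s_j^\bullet\neq0$ satisfies $s_j^\bullet(m_j^\bullet-[\mStilde(\vkappa)\vp]_j)\ge 0$. For components with $s_j^\bullet=0$, the closure condition forces the residual to vanish on $\setP(\vkappa)$. This is the row-zero remark after \eqref{eq:nominal-orthant-row-condition}; alternatively, the residual at $\vp=\vzero$ is $0$, and the closure condition makes it identically $0$. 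In both cases
\[
\bigl|m_j^\bullet-[\mStilde(\vkappa)\vp]_j\bigr| = s_j^\bullet\bigl(m_j^\bullet-[\mStilde(\vkappa)\vp]_j\bigr),
\]
so that
\[
\wtilde(\vp,\vkappa)=(\vs^\bullet)^\top\vmnom-(\vs^\bullet)^\top\mStilde(\vkappa)\vp
\]
for all $\vp\in\setP(\vkappa)$. This is the computation in the proof of Proposition~\ref{prop:orthant-wise-affine-representation}, now valid on the whole feasible box rather than just one region. We have $(\vs^\bullet)^\top\vmnom=\norm{\vmnom}_1$. Expanding $\mStilde(\vkappa)=\mR+\mX\diag(\vkappa)$ together with \eqref{eq:nominal-sigma-omega} gives $-(\vs^\bullet)^\top\mStilde(\vkappa)\vp=\sum_i(\sigma_i+\omega_i\kappa_i)p_i$.

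\emph{Step 2: separable box maximization.} By \eqref{eq:owner-feasible-set}, $\setP(\vkappa)$ is the product of intervals $[0,1/\sqrt{1+\kappa_i^2}]$. The objective $\norm{\vmnom}_1+\sum_i c_ip_i$, with $c_i=\sigma_i+\omega_i\kappa_i$, is therefore maximized coordinatewise. Each term $c_ip_i$ over $[0,u_i]$ has maximum $[c_i]_+u_i$. It is attained only at $u_i$ if $c_i>0$, only at $0$ if $c_i<0$, and anywhere if $c_i=0$. Summing the terms yields \eqref{eq:nominal-orthant-value}. Separability also shows that the maximizer set is exactly the product of these coordinate sets, which is the stated componentwise characterization of $p^{\mathrm{wc}}$.

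\emph{Main obstacle.} The only delicate point is Step 1: justifying that the sign identity holds on all of $\setP(\vkappa)$, including boundary points where a residual reaches zero, and handling the coordinates with $s_j^\bullet=0$. I will handle this by continuity and closure as described above. A residual that reaches zero contributes $0=s_j^\bullet\cdot 0$, so the linear formula is unaffected at those points.
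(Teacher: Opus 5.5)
Your proposal is correct and follows the paper's proof. Both use the definition of $\setK^\bullet$ to write the $\ell_1$ norm as the single linear functional $(\vs^\bullet)^\top(\vmnom-\mStilde(\vkappa)\vp)$ on all of $\setP(\vkappa)$, and then maximize coordinatewise over the box. Your continuity-and-closure argument, covering boundary points and the coordinates with $s_j^\bullet=0$, makes explicit a step the paper states in one line.
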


\begin{proof}
For $\vkappa\in\setK^\bullet$, Definition~\ref{def:nominal orthant-subset} gives
\[
\norm{\vmnom-\mStilde(\vkappa)\vp}_1
=
(\vs^\bullet)^\top
\bigl(\vmnom-\mStilde(\vkappa)\vp\bigr)
\]
for every $\vp\in\setP(\vkappa)$, including points at which a voltage reaches nominal. Substitution gives
\[
\norm{\vmnom}_1
+
\sum_{i=1}^n
(\sigma_i+\omega_i\kappa_i)p_i.
\]
The feasible set in \eqref{eq:owner-feasible-set} is a Cartesian product. Maximizing each linear term over its interval gives the stated realizations and value.
\end{proof}
For a fixed power factor setting, $\sigma_i+\omega_i\kappa_i$ is the contribution of one unit of active power at node $i$ to the affine expression for the aggregate voltage deviation. It accounts for the reactive power that accompanies the active power. A positive coefficient makes the largest feasible injection worst for the operator, while a negative coefficient makes zero injection worst.

\subsection{Cancellation Settings and Feasibility}
\label{sec:analytical-minimax-conditions}

We call a vector a \emph{cancellation setting} if it satisfies
\(
\sigma_i+\omega_i\kappa_i=0
\)
for every $i\in\setN$. Such a vector cancels every injection coefficient in \eqref{eq:nominal-orthant-value}. The componentwise cancellation conditions alone do not guarantee feasibility. The resulting vector must satisfy the range and robust voltage constraints in \eqref{eq:analy:K-set-definition} and the nominal orthant condition in Definition~\ref{def:nominal orthant-subset}.

If $\omega_i=0$ and $\sigma_i\leq0$, the $i$th summand in \eqref{eq:nominal-orthant-value} is zero for every feasible $\kappa_i$, so this coordinate can be chosen to help satisfy the coupled constraints. If the coefficients at all other coordinates are canceled and the resulting vector belongs to $\setK^\bullet$, the same minimax value $\norm{\vmnom}_1$ follows. If $\omega_i=0$ and $\sigma_i>0$, the summand is $\sigma_i/\sqrt{1+\kappa_i^2}$ and cannot be reduced to zero by cancellation. In this case, the operator must minimize the exact value \eqref{eq:worst-case-owner-value} over $\setK$.

\begin{theorem}[Robust minimax solution in closed form]
\label{thm:closed-form-minimax}
Assume $\omega_i \neq 0$ for all $i \in \setN$. Define
\[
\kappa_i^* \triangleq -\frac{\sigma_i}{\omega_i},
\qquad i \in \setN.
\]
Suppose $\vkappa^* \in \setK^\bullet$. Then $\vkappa^*$ is a global solution of \eqref{eq:exact-single-stage-game-in-kappa}, and
\[
\min_{\vkappa\in\setK}I(\vkappa)
=I(\vkappa^*)
=\norm{\vmnom}_1.
\]
Its original power factor representation is
\begin{subequations}
\label{eq:minimax-pf}
\begin{align}
\alpha_i^*
&=
\frac{1}{\sqrt{1+(\kappa_i^*)^2}}
=
\frac{1}{\sqrt{1+(\sigma_i/\omega_i)^2}},
\\
\xi_i^*
&=
\sgn(\kappa_i^*)
=
-\sgn(\sigma_i\omega_i),
\qquad i\in\setN.
\end{align}
\end{subequations}
Every $\vp^*\in\setP(\vkappa^*)$ is a worst-case realization. In particular, the choice
\[
p_i^*=\frac{1}{\sqrt{1+(\kappa_i^*)^2}}
\]
with $q_i^*=\kappa_i^*p_i^*$ saturates every normalized apparent power constraint, i.e., $(p_i^*)^2+(q_i^*)^2=1$ for all $i$.
\end{theorem}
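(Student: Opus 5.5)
The plan is to combine an upper bound at $\vkappa^*$ with a matching lower bound valid for every $\vkappa\in\setK$.

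For the upper bound, Proposition~\ref{prop:owner-best-response-fixed-kappa} applies because $\vkappa^*\in\setK^\bullet$ by hypothesis. By construction $\sigma_i+\omega_i\kappa_i^*=0$ for every $i$, since $\omega_i\neq0$. Every summand $[\sigma_i+\omega_i\kappa_i^*]_+/\sqrt{1+(\kappa_i^*)^2}$ in \eqref{eq:nominal-orthant-value} therefore vanishes, giving $I(\vkappa^*)=\norm{\vmnom}_1$. The same proposition's characterization in the zero-coefficient case shows that every $p_i\in[0,1/\sqrt{1+(\kappa_i^*)^2}]$ is a worst-case choice, so all of $\setP(\vkappa^*)$ consists of worst-case realizations.

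For the lower bound, fix any $\vkappa\in\setK$. The point $\vp=\vzero$ lies in $\setP(\vkappa)$ by \eqref{eq:owner-feasible-set}, and there $\wtilde(\vzero,\vkappa)=\norm{\vmnom}_1$ by \eqref{eq:K-l1-lifted-margin}. Hence $I(\vkappa)\geq\norm{\vmnom}_1$ for all $\vkappa\in\setK$. This holds on all of $\setK$, not only on $\setK^\bullet$, so no orthant analysis is needed outside $\setK^\bullet$. Since $\vkappa^*\in\setK^\bullet\subseteq\setK$, combining the two bounds shows that $\vkappa^*$ attains the minimum in \eqref{eq:exact-single-stage-game-in-kappa}, with value $\norm{\vmnom}_1$. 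I expect this step to be the main point to check, but it is straightforward. The key observation is that the operator can never push the worst case below the zero-output deviation, because constant power factor control produces $\vq=\vzero$ whenever $\vp=\vzero$.

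Finally, I translate back to the original variables. Lemma~\ref{lemma:k-transform} gives $\alpha_i^*=1/\sqrt{1+(\kappa_i^*)^2}=1/\sqrt{1+(\sigma_i/\omega_i)^2}$, since $(\kappa_i^*)^2=(\sigma_i/\omega_i)^2$. It also gives $\xi_i^*=\sgn(\kappa_i^*)=\sgn(-\sigma_i/\omega_i)=-\sgn(\sigma_i\omega_i)$, because $\sgn(\omega_i)=\sgn(1/\omega_i)$ and $\sgn(0)=0$ covers $\sigma_i=0$. The same lemma shows that $(\valpha^*,\vxi^*)\in\setA$ solves \eqref{eq:full-network-minimax-game}. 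For the saturating realization $p_i^*=1/\sqrt{1+(\kappa_i^*)^2}$ with $q_i^*=\kappa_i^*p_i^*$, we get $(p_i^*)^2+(q_i^*)^2=(1+(\kappa_i^*)^2)(p_i^*)^2=1$. This point lies in $\setP(\vkappa^*)$, so by the upper-bound step it is also a worst-case realization.
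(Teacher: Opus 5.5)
Your proposal is correct and follows essentially the same argument as the paper. Cancellation at $\vkappa^*$ via Proposition~\ref{prop:owner-best-response-fixed-kappa} gives $I(\vkappa^*)=\norm{\vmnom}_1$, feasibility of $\vp=\vzero$ gives the matching lower bound on all of $\setK$, and Lemma~\ref{lemma:k-transform} recovers $(\valpha^*,\vxi^*)$; your explicit checks of the sign identity and of $(p_i^*)^2+(q_i^*)^2=1$ fill in details the paper leaves implicit.
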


\begin{proof}
For every $i$,
\[
\sigma_i+\omega_i\kappa_i^*
=
0,
\]
so Proposition~\ref{prop:owner-best-response-fixed-kappa} gives $I(\vkappa^*)=\norm{\vmnom}_1$. For every $\vkappa\in\setK$, the feasible realization $\vp=\vzero$ gives
\[
\wtilde(\vzero,\vkappa)=\norm{\vmnom}_1.
\]
Hence $I(\vkappa)\geq\norm{\vmnom}_1$ for every $\vkappa\in\setK$, and $\vkappa^*$ attains this global lower bound. The formulas for $(\valpha^*,\vxi^*)$ follow from Lemma~\ref{lemma:k-transform}. Since every coefficient in Proposition~\ref{prop:owner-best-response-fixed-kappa} is zero at $\vkappa^*$, every feasible $\vp^*$ attains the same inner value.
\end{proof}

If the cancellation vector does not belong to $\setK^\bullet$, the theorem does not necessarily solve the exact minimax problem. It still cancels the nominal affine branch, but another voltage sign pattern may determine $I(\vkappa)$. In that case, the exact objective \eqref{eq:worst-case-owner-value} must be minimized over $\setK$.

The operator cannot control the active power injections, but it controls the reactive power that accompanies each unit of active power. The setting $\kappa_i^*=-\sigma_i/\omega_i$ cancels the contribution of the $i$th injection to the nominal affine branch of the aggregate voltage deviation. When $\vkappa^*\in\setK^\bullet$, no feasible injection raises the aggregate deviation above the fixed initial offset $\norm{\vmnom}_1$. This is the sense in which the minimax power factors neutralize worst-case active power injections.

The minimax power factors \eqref{eq:minimax-pf} depend explicitly on the topology-dependent quantities $\vsigma$ and $\vomega$. 
Moreover, \eqref{eq:minimax-pf} is identical to the recommended power factor settings prescribed by \cite{osti_1431468,rylander_methods_2016}. Thus, these empirical rules can be interpreted as robust reactive power responses to worst-case active power perturbations.

\section{Numerical Results}
\label{sec:numerical-results}

The four test networks are the 37-bus Hawaii and 200-bus Illinois synthetic grids from Texas A\&M \cite{birchfield_grid_2017}, the 73-bus RTS-GMLC \cite{barrows_ieee_2019}, and the IEEE 118-bus case from MATPOWER \cite{zimmerman_matpower_2011}. In every table, $\abs{PQ}$ counts participating PQ nodes. Thus, for example, RTS-GMLC has 40 participating nodes among its 73 buses. The voltage sensitivities are derivatives of the nonlinear AC power flow equations at each initial operating point, while the optimization experiments use the resulting linear model. We model each optimization problem described below in JuMP \cite{jump} and solve it with Gurobi \cite{gurobi}.

\subsection{Numerical Check of the Cancellation Setting}
Table~\ref{tab:minimax-pfs} compares the analytical cancellation setting $\vkappa^*$ from Theorem~\ref{thm:closed-form-minimax} with a numerical implementation check that does not use the ratio $-\sigma_i/\omega_i$. We compute
\begin{equation}
\label{eq:numerical-cancellation-program}
\vkappa^{\mathrm{num}}
\in
\operatorname*{arg\,min}_{\vkappa\in\Range(\vk)}
\sum_{i=1}^n(\sigma_i+\omega_i\kappa_i)^2,
\end{equation}
where $\Range(\vk)$ is the lifted box in \eqref{eq:analy:k-range-set}. This is the program used to produce Table~\ref{tab:minimax-pfs}. Its Hessian is $2\diag(\omega_1^2,\ldots,\omega_n^2)$, so it is a convex quadratic least squares problem with box constraints. The program checks the numerical model of the cancellation equations and the inverse map to the power factor variables.

A direct numerical solution of \eqref{eq:exact-single-stage-game-in-kappa} must account for every feasible voltage sign pattern. The operator setting also changes both the voltage response and the owner's active power bounds. Across all four networks, $\vkappa^{\mathrm{num}}$ and its power factor representation agree with the analytical setting at solver precision. This agreement verifies the cancellation computation but is not independent evidence of global minimax optimality. When $\vkappa^*\in\setK^\bullet$, Theorem~\ref{thm:closed-form-minimax} proves global optimality. Section~\ref{sec:numres:rating-feasibility} separately evaluates the fixed cancellation setting and solves the complete minimax problem directly.

\begin{table}[t]
    \renewcommand{\arraystretch}{1.25}
    \caption{The numerical cancellation check \eqref{eq:numerical-cancellation-program} recovers the analytical settings at solver precision.}
    \centering
    \begin{tabular}{|c|c|c|c|c|}
    \hline
\multicolumn{2}{|c|}{Test Network} & \multicolumn{2}{c|}{Relative Error}  & \multirow{2}{*}{Runtime} \\
    \cline{1-4}
         Name & $\abs{PQ}$       & $\valpha^*$         & $\vkappa^*$         &   \\
         \hline
         Hawaii 37       & 27  & $1.26 \times 10^{-10}$ & $8.71 \times 10^{-10}$ & 0.05 s  \\
         RTS-GMLC        & 40  & $1.86 \times 10^{-7}$  & $3.15 \times 10^{-6}$  & 0.002 s \\
         IEEE 118        & 64  & $1.92 \times 10^{-8}$  & $1.49 \times 10^{-7}$  & 0.002 s \\
         Illinois 200    & 162 & $2.89 \times 10^{-14}$ & $5.67 \times 10^{-13}$ & 0.012 s \\
    \hline
    \end{tabular}
    \label{tab:minimax-pfs}
\end{table}

We also apply the injections that saturate the apparent power constraints at $\vkappa^*$ to the nonlinear AC power flow equations and compare the resulting voltages with the linear approximation. When $\vkappa^*\in\setK^\bullet$, Theorem~\ref{thm:closed-form-minimax} identifies these injections as representative worst case realizations. Otherwise, they are boundary test points without that guarantee. We measure agreement using the maximum absolute voltage magnitude difference across participating nodes and the relative difference between the two aggregate deviations. At a uniform DER rating $\bar{s}=0.01$~p.u. per participating node, the first metric ranges from $5.4\times10^{-6}$ to $2.1\times10^{-3}$~p.u. across the four networks. The second ranges from $0.01\%$ to $6.1\%$. Both metrics come from evaluating the injections at the boundary of the feasible set with the linear and nonlinear models. The discrepancies increase with the uniform apparent power rating $\bar{s}$. On the Illinois 200-bus case, the maximum voltage magnitude difference reaches $0.057$~p.u. at $\bar{s}=0.05$~p.u. Thus, the linear model is accurate for small deviations from the initial operating point, while the analytical result may be less applicable to large injections under the nonlinear AC power flow model. The cancellation result fixes the aggregate deviation at its initial offset under its assumptions, but it does not require every voltage to remain at its initial value.

\subsection{Robust Objective Versus the Minimum Operator Power Factor}
\label{sec:numeres:game-value-vs-floor}

We next examine how the nominal orthant expression varies with the minimum operator power factor $\alphamin$ that bounds the lifted action set. Let $\widehat{\vkappa}$ be the cancellation setting projected onto the lifted box allowed by $\alphamin$. Define the excess above the fixed initial offset as
\[
V^\bullet(\widehat{\vkappa})
\triangleq
\sum_{i=1}^n
\frac{[\sigma_i+\omega_i\widehat{\kappa}_i]_+}
{\sqrt{1+\widehat{\kappa}_i^2}}.
\]
Fig.~\ref{fig:game-value-vs-floor} plots $\norm{\vmnom}_1+V^\bullet(\widehat{\vkappa})$ for $\alphamin\in[0.7,0.99]$. When the unconstrained cancellation setting $\vkappa^*$ is realizable within the floor, every summand vanishes and the value equals $\norm{\vmnom}_1$. As $\alphamin$ tightens and clips $\vkappa^*$, the plotted expression increases, quantifying the regulation capability forgone by restricting reactive power. The floor at which the value departs from $\norm{\vmnom}_1$ is network dependent and is governed by the smallest cancellation power factor. The value rises sharply as $\alphamin\to1$, where the operator is driven to unity power factor and retains no reactive authority.

\subsection{Comparison With a Centralized Curtailment Benchmark}
\label{sec:numres:pareto}
As an operational reference, we formulate an approximate optimal power flow model that minimizes active power curtailment. Maximizing the total normalized active power is equivalent to minimizing curtailment from the available upper bounds in $\setP$. The centralized planner selects the active power and reactive to active power ratios jointly. In particular, we solve
\begin{equation}
\label{eq:benevolent-program}
    \max_{\vp \in \setP ,\vkappa \in \setK} \vct{1}^\T \vp \quad \text{s.t.} \quad  \vmbar \leq \mStilde(\vkappa)\vp \leq \vmubar.
\end{equation}
The centralized benchmark \eqref{eq:benevolent-program} gives one planner complete knowledge of the active and reactive power decisions. Its objective differs from the voltage deviation objective in \eqref{eq:full-network-minimax-game}. It provides a conventional curtailment reference for comparing the voltage behavior that results from centralized coordination with the behavior of the robust power factor setting.

\begin{figure}[t]
    \centering
    \begin{overpic}[width=1\linewidth,keepaspectratio]{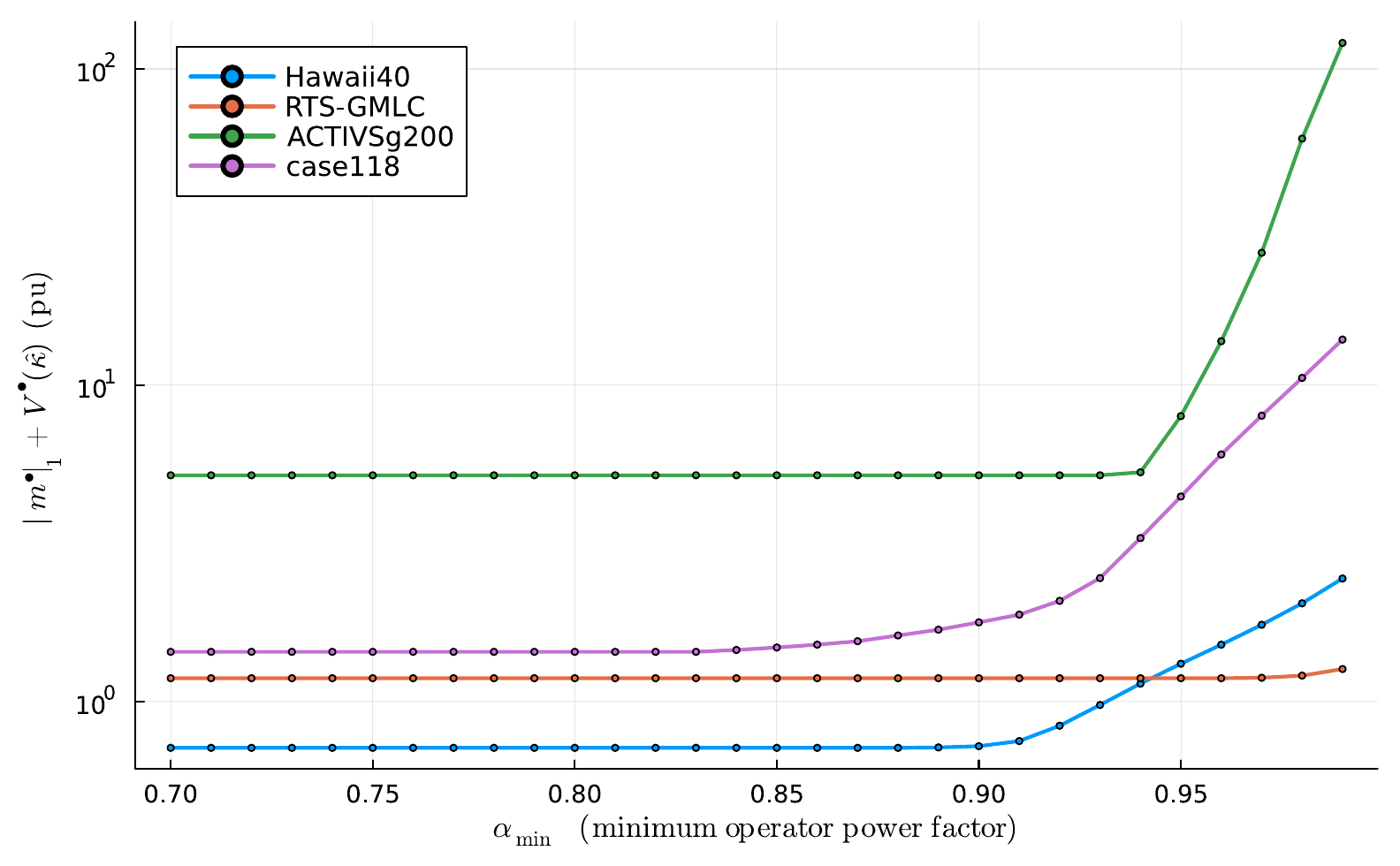}
        \put(0,22){\rotatebox{90}{\colorbox{white}{\parbox[c][1.8em][c]{0.36\linewidth}{\centering\scriptsize $\norm{\vmnom}_1+V^\bullet(\widehat{\vkappa})$ (p.u.)}}}}
        \put(28,-0.5){\colorbox{white}{\parbox[c][1.1em][c]{0.62\linewidth}{\centering\scriptsize $\alphamin$ (minimum operator power factor)}}}
    \end{overpic}
    \caption{Nominal orthant expression as a function of the minimum operator power factor $\alphamin$, where $\widehat{\vkappa}$ is the constrained cancellation setting. The value is flat at $\norm{\vmnom}_1$ while the cancellation power factors are realizable within the floor and increases once the floor clips $\vkappa^*$.}
    \label{fig:game-value-vs-floor}
\end{figure}

We compare the centralized curtailment solution with the cancellation setting
\ifextendedversion
in Fig.~\ref{fig:comp-vs-benev}
\fi
using the Hawaii and RTS-GMLC test networks. We show a representative realization that saturates each apparent power constraint at the cancellation setting. This is a worst case realization when the setting lies in $\setK^\bullet$. On Hawaii, the centralized planner injects slightly more total active power than the cancellation realization ($\vct{1}^\top\vp=26.7$ versus $25.0$~p.u.), but incurs substantially larger voltage deviations. The mean distance to nominal is $0.089$~p.u. under the centralized benchmark and $0.027$~p.u. at the cancellation setting. Indeed, the realized $\ell_1$ deviation, $27\times0.027\approx0.72$~p.u., matches the robust minimax value $\norm{\vmnom}_1=0.71$~p.u. predicted by Theorem~\ref{thm:closed-form-minimax}. On RTS-GMLC, the ordering reverses. The mean distance is $0.235$~p.u. at the cancellation setting and $0.031$~p.u. under the centralized benchmark. At full normalized rating, $\setK$ is empty under the voltage limits supplied with this case. Therefore Theorem~\ref{thm:closed-form-minimax} is not globally optimal on this instance.

\ifextendedversion
\begin{figure}
    \centering
    \includegraphics[width=1\linewidth,keepaspectratio]{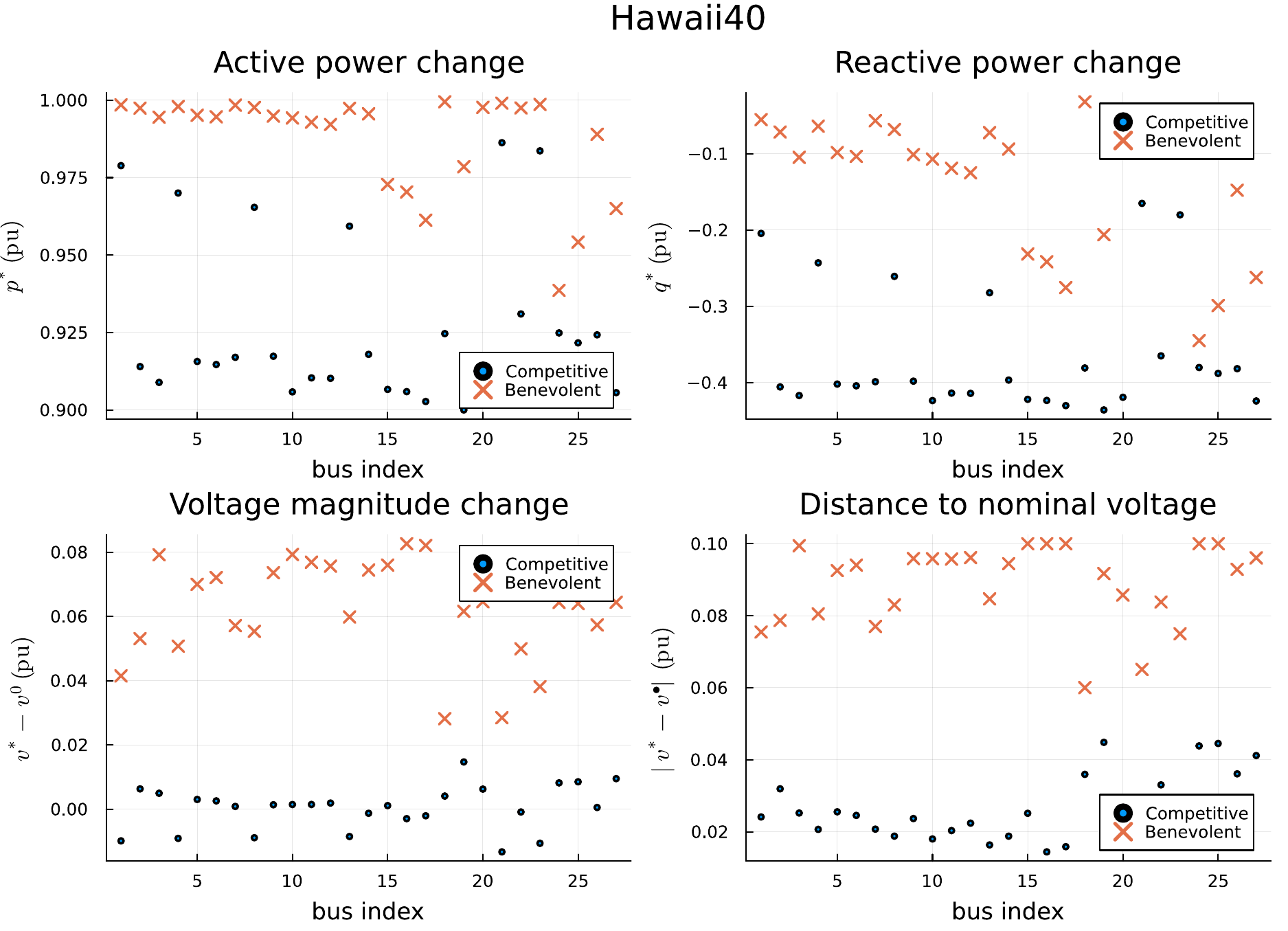}
    \includegraphics[width=1\linewidth,keepaspectratio]{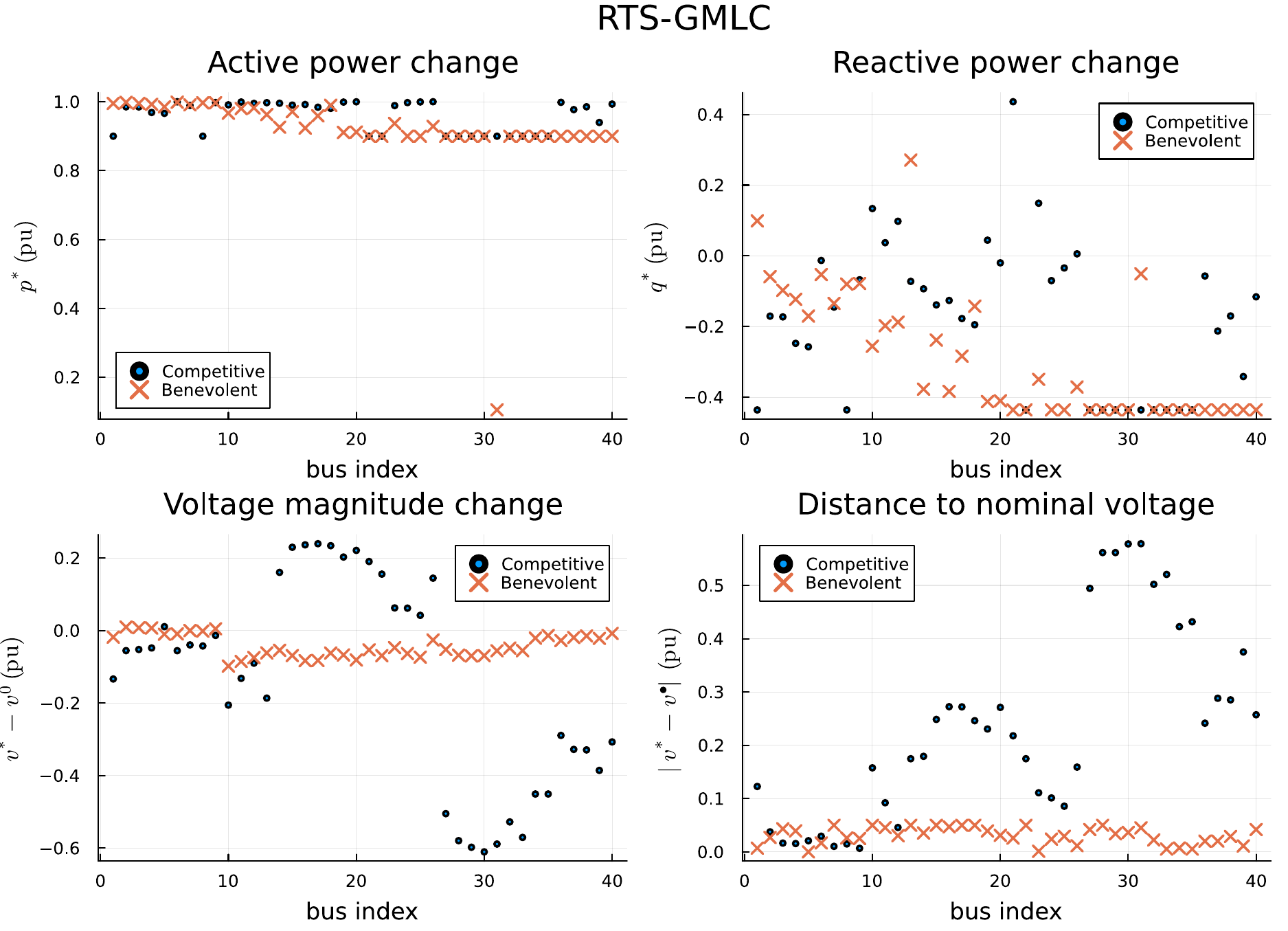}
    \caption{Comparison of a boundary realization at the cancellation setting with the centralized benchmark \eqref{eq:benevolent-program} on the Hawaii (top) and RTS-GMLC (bottom) networks. The Hawaii setting is a feasible robust policy. The RTS-GMLC setting is only a diagnostic because $\setK$ is empty at this rating. In each panel, $\vp^*$ and $\vq^*$ denote the normalized active and reactive outputs of the displayed solution, and $\vv^*$ denotes its voltage magnitudes. The vectors $\vv^0$ and $\vvnom$ are the initial and nominal voltage profiles. The apparent power constraints at each node are the unit ball in the complex plane.}
    \label{fig:comp-vs-benev}
\end{figure}
\fi

\subsection{Rating Feasibility and the Nominal Orthant}
\label{sec:numres:rating-feasibility}

On the nominal orthant subset $\setK^\bullet$ of Definition~\ref{def:nominal orthant-subset}, where the owner's entire feasible set $\setP(\vkappa)$ remains in the nominal orthant, Theorem~\ref{thm:closed-form-minimax} shows that the cancellation setting $\vkappa^*$ attains the global lower bound $\norm{\vmnom}_1$ on the worst case objective. The operator cannot reduce its worst case deviation below this value because zero injection is always feasible. No owner realization can raise the deviation above it at $\vkappa^*$. Thus, the exact value in \eqref{eq:worst-case-owner-value} satisfies $I(\vkappa^*)=\norm{\vmnom}_1$.

We first fix the operator action at $\vkappa^*$ and compute the inner maximization \eqref{eq:worst-case-owner-value} exactly as a mixed integer program. This is the inner maximization of the game. It evaluates $I(\vkappa^*)$ without solving the outer minimization over $\setK$. We begin with a uniform device rating $\bar{s}=1$~p.u. on the system base at every participating node. The ratio $I(\vkappa^*)/\norm{\vmnom}_1$ is $1.00$ for Hawaii, $2.47$ for Illinois 200, $13.05$ for RTS-GMLC, and $15.24$ for IEEE 118. A ratio above one means that an injection near the device rating can push the voltage residual across the nominal profile into another orthant. The value $\norm{\vmnom}_1$ is then only the nominal orthant prediction at the cancellation setting, while its exact inner value is strictly larger. The choice $\bar{s}=1$~p.u. is a deliberate stress test in which every participating node can inject apparent power equal to the system base. The robust model does not require this rating. This fixed action calculation alone does not solve the outer minimization or establish that $\vkappa^*$ belongs to $\setK$.

The quantity $\bar{s}_i$, introduced in Section~\ref{sec:prelim-network-model}, is the physical apparent power rating at node $i$ expressed on the system base. In practice, these ratings are far below the system base. On a 100~MVA base, $\bar{s}_i=0.05$ to $0.1$~p.u. corresponds to 5 to 10~MVA of devices at every participating node, or hundreds of MVA of aggregate DER capacity on these systems. Normalizing each device output by $\bar{s}_i$ is equivalent to scaling the corresponding columns of the voltage sensitivity matrices $\mR$ and $\mX$ by $\bar{s}_i$. This scaling leaves $(\vkappa^*,\valpha^*,\vxi^*)$ and the nominal value $\norm{\vmnom}_1$ unchanged because $\bar{s}_i$ cancels in the ratio $\vsigma/\vomega$. Table~\ref{tab:rating-sweep} reports $I(\vkappa^*)/\norm{\vmnom}_1$ as a uniform rating $\bar{s}$ is reduced. The fixed cancellation value equals the lower bound for Hawaii at all reported ratings and for Illinois 200 by $\bar{s}=0.1$. RTS-GMLC and IEEE 118 approach the lower bound with ratios $1.14$ and $1.22$ at $\bar{s}=0.05$. For RTS-GMLC, \eqref{eq:uniform-rating-nominal-orthant-threshold} gives $\bar{s}\leq0.0151$~p.u. at $\vkappa^*$, but $\setK^\bullet$ remains empty at every positive rating because three initial voltages equal their upper limits and the corresponding robust constraints are incompatible. The power factors are invariant to within $10^{-14}$ under changes in $\bar{s}$. These values describe the performance of the cancellation setting.

\begin{table}[t]
    \renewcommand{\arraystretch}{1.25}
    \caption{Worst-case ratio $I(\vkappa^*)/\norm{\vmnom}_1$ at the fixed cancellation setting versus the apparent power rating $\bar{s}$ at each node (system p.u.). A ratio of $1.00$ means that its value equals the universal lower bound.}
    \centering
    \begin{tabular}{|c|c|c|c|c|c|}
    \hline
        Network & $\bar{s}=1$ & $0.5$ & $0.2$ & $0.1$ & $0.05$ \\
    \hline
        Hawaii 37    & 1.00  & 1.00 & 1.00 & 1.00 & 1.00 \\
        RTS-GMLC     & 13.05 & 6.65 & 2.83 & 1.62 & 1.14 \\
        IEEE 118     & 15.24 & 7.65 & 3.13 & 1.77 & 1.22 \\
        Illinois 200 & 2.47  & 1.45 & 1.03 & 1.00 & 1.00 \\
    \hline
    \end{tabular}
    \label{tab:rating-sweep}
\end{table}

We next solve the complete outer problem \eqref{eq:exact-single-stage-game-in-kappa} by constraint generation over voltage sign patterns. The master problem enforces the full operator set $\setK$ and provides a global lower bound. The exact mixed integer inner problem finds a worst case sign pattern and provides an upper bound. Gurobi solves both the global nonconvex master and the inner problem. Agreement of the bounds certifies the minimax value.

For a concise comparison, define
\[
L\triangleq\norm{\vmnom}_1,
\qquad
C\triangleq I(\vkappa^*),
\qquad
V\triangleq\min_{\vkappa\in\setK}I(\vkappa).
\]
Table~\ref{tab:direct-minimax} reports the representative rating $\bar{s}=0.05$~p.u. with $\alphamin=0.7$. This power factor floor is also the lower endpoint used in Fig.~\ref{fig:game-value-vs-floor}. The cancellation value $C$ is a fixed action diagnostic even when $\vkappa^*$ is inadmissible. The direct value $V$ is reported only when $\setK$ is nonempty. 

\begin{table}[t]
    \renewcommand{\arraystretch}{1.25}
    \caption{Direct minimax comparison at $\bar{s}=0.05$~p.u. and $\alphamin=0.7$. Here, $C=I(\vkappa^*)$, $V=\min_{\vkappa\in\setK}I(\vkappa)$, and $L=\norm{\vmnom}_1$. A dash indicates that the operator set is empty.}
    \centering
    \begin{tabular}{|c|c|c|c|}
    \hline
        Network & Set relation & $C/L$ & $V/L$ \\
    \hline
        Hawaii 37    & $\vkappa^*\in\setK^\bullet$ & 1.00 & 1.00 \\
        RTS-GMLC     & $\setK=\varnothing$          & 1.14 & --   \\
        IEEE 118     & $\vkappa^*\notin\setK$      & 1.22 & 1.00 \\
        Illinois 200 & $\vkappa^*\in\setK^\bullet$ & 1.00 & 1.00 \\
    \hline
    \end{tabular}
    \label{tab:direct-minimax}
\end{table}

The global lower and upper bounds agree at solver precision for the three nonempty cases. On Hawaii and Illinois 200, the cancellation setting belongs to $\setK^\bullet$, and the direct solution confirms the value established by Theorem~\ref{thm:closed-form-minimax}. On IEEE 118, the cancellation setting is not an admissible operator action at this power factor floor. The theorem therefore does not apply to that setting. A different admissible action nevertheless attains $V=L$, which shows that the nominal offset can remain achievable outside the cancellation regime. For RTS-GMLC, no power factor setting keeps the full reference box \eqref{eq:prelim:injector-action-set} within the voltage limits supplied with the case. Thus, $\setK$ is empty, the assumed nonempty operator set does not hold, and this model instance has no minimax value. Its reported cancellation ratio is not the value of a feasible policy.

Finally, the cancellation setting can be computed well beyond these representative cases. Across networks ranging from $14$ to $2000$ buses, the relative error between \eqref{eq:minimax-pf} and the numerical implementation check remains between $10^{-7}$ and $10^{-14}$, with each instance solved in under one second.

\section{Discussion and Conclusion}
\label{sec:discussion}
Here, we briefly discuss the results and their implications.

\subsection{Commentary on Results}
\label{sec:discussion:commentary}

\subsubsection{Intuition}
The apparent power limit links the active power capacity with the reactive response available for voltage regulation. A larger power factor reserves more capacity for active power but gives the operator less reactive authority. Section~\ref{sec:numeres:game-value-vs-floor} quantifies the resulting increase in the nominal orthant objective as the minimum allowed power factor rises. The cancellation setting selects the reactive response needed to remove each active injection coefficient when the setting is feasible.

\subsubsection{Computational tractability}
\label{sec:discussion:tractability}

The numerical results in Section~\ref{sec:numerical-results} show that the minimax setting is exceptionally tractable in the nominal orthant regime. The power factors are available in closed form from \eqref{eq:minimax-pf} and require no optimization. The numerical check in \eqref{eq:numerical-cancellation-program} is a convex quadratic least squares problem that Gurobi \cite{gurobi} solves in milliseconds. It verifies the implementation and is not a substitute for a direct numerical minimax solution. The direct outer solves at $\bar{s}=0.05$ confirm the analytical value on Hawaii and Illinois 200. They also find a different optimal action on IEEE 118, where the cancellation setting is inadmissible. Outside the nominal orthant regime, computing the minimax action requires the more demanding global constraint generation procedure.

\subsubsection{Alignment with practice}
\label{sec:discussion:alignment-engineering}

According to \cite[5.2, Table 7]{noauthor_ieee_2018}, the recommended maximum reactive power injection/absorption capability of a DER for ``normal operating performance" is 44\% of the maximum magnitude complex power injection. That is, the default $\alphamin$ is given as $\alphamin = \cos(\arcsin(0.44)) = 0.898$. This provides a value based on IEEE 1547-2018 for the power factor floor $\alphamin$ in Assumption~\ref{assum:bounded-power factors}.

Additionally, empirical research has obtained equations for recommended power factor control parameters that resemble the cancellation power factors \eqref{eq:minimax-pf}. The ``recommended" settings in \cite{osti_1431468,rylander_methods_2016}, which were empirically determined, are identical to the analytical settings derived here. When the cancellation setting belongs to $\setK^\bullet$, Theorem~\ref{thm:closed-form-minimax} gives these settings a robust minimax interpretation. Across our test networks, the median cancellation power factors range from $0.92$ to $0.99$ and lie predominantly above the IEEE 1547-2018 default of $0.898$.

\subsection{Limitations and Future Work}

\subsubsection{Structural constraints for different reactive power controls}
\label{sec:discussion:limitations:q-params}
Power factor control is the default reactive power control mode in IEEE Standard 1547-2018 \cite{noauthor_ieee_2018}. Thus, the proposed power factor game provides a practical starting point for competitive analyses of inverter control. However, the operator action set we have analyzed does not represent the control curves available under other modes specified by~\cite{noauthor_ieee_2018}. Future work should consider the other known constraints of control modes such as volt-var. The minimax reactive power response obtained here may provide a target for those controls, but their direct interaction with active power requires different operator action sets.

\subsubsection{Improved models of the power flow equations}
\label{sec:additional-models}
The closed form analysis relies on a first-order voltage model, and the nonlinear validation shows that its accuracy decreases as the injection scale grows. Future work should consider other linear models \cite{huang_generalized_2021,losada_carreno_logv_2022}, convex power flow relaxations, and full nonconvex AC formulations. The orthant reduction or classical duality arguments may not apply directly, but gradient methods such as \cite{schafer_competitive_2019} may support nonlinear formulations. When the network model is unknown, voltage sensitivities can instead be estimated from measurements \cite{gupta_model-less_2022,yeh_robust_2022} and used within the present game.

\subsubsection{Development of improved strategy profiles}
\label{sec:limitations:active-power-strategy}
An important direction for future work is to enrich the active power uncertainty and attacker models. In the power factor game, all active power agents are aggregated into the inner maximization in \eqref{eq:worst-case-owner-value}. More realistic future work can consider multiple grid aggregators, distributional uncertainty, or multistage attacker models, particularly when the grid topology and measurements from Section~\ref{sec:additional-models} are uncertain.

\ifextendedversion
A further important extension is the fully compromised setting, in
which the adversary directly selects the complex power injection $s_i$,
$\abs{s_i}\le 1$, at a compromised subset of nodes $\setC \subseteq
\setN$, while the operator retains authority over the power factor
parameters only at the uncompromised nodes $\setN\setminus\setC$. The
orthant-wise machinery of Section~\ref{sec:analytical-results}
partially extends to this setting. On each orthant, the worst-case
contribution of a compromised node $i\in\setC$ becomes a linear
maximization over the half-disk $\{s_i : \abs{s_i}\le 1,\
\Re{s_i}\ge 0\}$, yielding closed form terms such as
$\sqrt{\sigma_i^2+\omega_i^2}$ (when $\sigma_i \ge 0$) that the
operator cannot influence. Characterizing the minimax solutions of the
resulting defender--attacker partition game would quantify the value of
reactive power channel integrity.
\fi

\section*{Acknowledgement}
The authors thank Santiago Grijalva for his comments on an early version of this work.

\ifextendedversion\else
\balance
\fi
\bibliographystyle{IEEEtran}

{%
\ifdefined\useoriginaltemplate
  \ifextendedversion\else
    \renewcommand{\footnotesize}{\fontsize{7pt}{7.5pt}\selectfont}%
  \fi
\fi
\footnotesize
\bibliography{refs}
}

\ifextendedversion
\appendix

\subsection{Proof of Lemma~\ref{lemma:implicit-representation}}
\label{apdx:proof-lemma-implicit}
\begin{proof}
Fix $i\in\setN$. By definition,
\[
q_i = \frac{\xi_i}{\alpha_i}\sqrt{1-\alpha_i^2}\,p_i,
\]
so
\[
q_i^2 = \frac{1-\alpha_i^2}{\alpha_i^2}\,p_i^2.
\]
Hence
\[
p_i^2+q_i^2
=
p_i^2\left(1+\frac{1-\alpha_i^2}{\alpha_i^2}\right)
=
\frac{p_i^2}{\alpha_i^2}.
\]
Therefore, if $p_i>0$,
\[
\frac{p_i}{\sqrt{p_i^2+q_i^2}}
=
\frac{p_i}{|p_i|/\alpha_i}
=
\alpha_i,
\]
since $p_i\ge 0$. The admissibility condition $\xi_i=0$ if and only if $\alpha_i=1$ also gives $\sgn(q_i)=\xi_i$. Finally, $p_i=0$ directly implies $q_i=0$ for every admissible setting.
\end{proof}

\subsection{Proof of Lemma~\ref{lemma:k-transform}}
\label{apdx:proof-lemma-k-transform}
\begin{proof}
Fix $(\valpha,\vxi)\in\setA$ and let $\vkappa=\vk(\valpha,\vxi)$. The admissibility rule gives $\xi_i^2=1$ when $\alpha_i<1$ and $\xi_i=0$ when $\alpha_i=1$. Therefore, the inverse in \eqref{eq:apdx:k-xi-inv} recovers
\[
\frac{1}{\sqrt{1+\kappa_i^2}}=\alpha_i,
\qquad
\sgn(\kappa_i)=\xi_i.
\]
Also,
\[
\mS(\valpha,\vxi)=\mR+\mX\diag(\vkappa)=\mStilde(\vkappa).
\]
The range and robust voltage conditions that define $\setA$ consequently imply $\vkappa\in\setK$. This proves that the map is well defined and injective.

Conversely, let $\vkappa\in\setK$ and define $(\valpha,\vxi)$ by \eqref{eq:apdx:k-xi-inv}. Membership in $\Range(\vk)$ gives $\valphamin\leq\valpha\leq\vone$. It also gives $\xi_i\in\{\pm1\}$ if $\alpha_i<1$ and $\xi_i=0$ if $\alpha_i=1$. For each $i\in\setN$,
\[
\frac{\xi_i}{\alpha_i}\sqrt{1-\alpha_i^2}
=
\frac{\sgn(\kappa_i)}{(1+\kappa_i^2)^{-1/2}}
\sqrt{1-\frac{1}{1+\kappa_i^2}}
=
\kappa_i.
\]
Thus, $\vk(\valpha,\vxi)=\vkappa$ and $\mS(\valpha,\vxi)=\mStilde(\vkappa)$. The robust voltage conditions in $\setK$ then imply $(\valpha,\vxi)\in\setA$. This proves surjectivity and hence bijectivity.

Finally, the recovered identity $\alpha_i=(1+\kappa_i^2)^{-1/2}$ gives
\[
\setP(\valpha,\vxi)
=
\prod_{i=1}^n[0,\alpha_i]
=
\setP(\vkappa).
\]
For every $\vp$ in this common feasible set, $\mS(\valpha,\vxi)=\mStilde(\vkappa)$ implies
\[
w(\vp,(\valpha,\vxi))
=
\norm{\vmnom-\mStilde(\vkappa)\vp}_1
=
\wtilde(\vp,\vkappa).
\]
\end{proof}

\subsection{Proof of Lemma~\ref{lemma:compact-convex action sets}}
\label{apdx:proof-of-compact-cvx-action-sets}

\begin{proof}
The range in \eqref{eq:analy:k-range-set} is a compact convex box. For each fixed $\vp'\in\setP$,
\[
\mStilde(\vkappa)\vp'
=
\mR\vp'+\mX\diag(\vp')\vkappa
\]
is affine in $\vkappa$. Hence $\setK$ is the intersection of the range box with closed affine halfspaces indexed by $\vp'\in\setP$. It is compact and convex. It is nonempty because $\setA$ is nonempty and Lemma~\ref{lemma:k-transform} maps $\setA$ onto $\setK$.

For fixed $\vkappa$, \eqref{eq:owner-feasible-set} is a nonempty compact convex box. Let $\vd(\vkappa)$ have entries
\[
d_i(\vkappa)\triangleq\frac{1}{\sqrt{1+\kappa_i^2}}.
\]
Every $\vp\in\setP(\vkappa)$ can be written as $\vp=\diag(\vd(\vkappa))\vz$ for some $\vz\in[0,1]^n$. Therefore,
\[
I(\vkappa)
=
\max_{\vz\in[0,1]^n}
\norm{
\vmnom-\mStilde(\vkappa)\diag(\vd(\vkappa))\vz
}_1.
\]
The maximand is continuous in $(\vkappa,\vz)$, and the maximization is over a fixed compact set. The maximum theorem implies that $I$ is continuous. Since $\setK$ is nonempty and compact, $I$ attains a minimum on $\setK$.
\end{proof}

\vspace{-2em}
\begin{IEEEbiographynophoto}{Cameron Khanpour} is a Ph.D. student in the School of Electrical and Computer Engineering at the Georgia Institute of Technology, Atlanta, GA, USA. His research interests are in optimization, algorithms, and probabilistic methods for the electric grid. He received the B.S. degree in Electrical Engineering and the M.S. degree in Mathematics from the Georgia Institute of Technology. 
\end{IEEEbiographynophoto}

\vspace{-2em}
\begin{IEEEbiographynophoto}{Samuel Talkington} received the Ph.D. degree in electrical and computer engineering from the Georgia Institute of Technology, Atlanta, GA, USA, in 2026. He was a National Science Foundation Graduate Research Fellow. 

He is currently affiliated with the Department of Electrical Engineering and Computer Science at the University of Michigan, Ann Arbor, MI, USA, where he will begin an appointment as an Assistant Professor in 2027. His research interests are in efficient algorithms for decision-making and inverse problems in electric power systems.
\end{IEEEbiographynophoto}

 \vspace{-2em}
\begin{IEEEbiographynophoto}{Mathieu Dahan} received the M.S. and Ph.D. degrees in computational science and engineering from the Massachusetts Institute of Technology, Cambridge, MA, USA, in 2016 and 2019, respectively.

He is currently an Associate Professor in the School of Industrial and Systems Engineering at the Georgia Institute of Technology, Atlanta, GA, USA. His research interests are in combinatorial optimization, game theory, and predictive analytics, with applications to service and healthcare operations, humanitarian systems, and logistics and supply chain management. 
\end{IEEEbiographynophoto}

\vspace{-2em}
\begin{IEEEbiographynophoto}{Daniel K. Molzahn} is an Associate Professor in the School of Electrical and Computer Engineering at the Georgia Institute of Technology. He was a computational engineer in the Energy Systems Division at Argonne National Laboratory and a Dow Postdoctoral Fellow in Sustainability at the University of Michigan, Ann Arbor. He received the B.S., M.S., and Ph.D. degrees in electrical engineering and the Masters of Public Affairs degree from the University of Wisconsin--Madison.
\end{IEEEbiographynophoto}

\fi

\end{document}